\newcommand{\Z}{\mathbb{Z}}
\newcommand{\C}{\mathcal{C}}
\newcommand{\F}{\mathbb{F}}
\let \tr \relax
\DeclareMathOperator{\tr}{Tr}
\newtheorem{theorem}{Theorem}[section]
\newtheorem{proposition}[theorem]{Proposition}
\newtheorem{lemma}[theorem]{Lemma}%
\newtheorem{remark}[theorem]{Remark}%
\newtheorem{example}[theorem]{Example}%
\newtheorem{definition}[theorem]{Definition}%
\begin{document}
\title{The Trace Dual of Nonlinear Skew Cyclic Codes}

\author[Bossaller]{Daniel Bossaller}
\address{Department of Mathematical Sciences, University of Alabama in Huntsville, 
Huntsville, AL, USA}
\email{daniel.bossaller@uah.edu}

\author[Herden]{Daniel Herden}
\address{Department of Mathematics, Baylor University, Waco, TX, USA}
\email{daniel\_herden@baylor.edu}

\author[Ruiz-Bola\~nos]{Indalecio Ruiz-Bola\~nos}
\address{Department of Mathematics, Texas State University, San Marcos, TX, USA}
\email{inda.ruiz@txstate.edu}

\thanks{The second author was supported by Simons Foundation grant MPS-TSM-00007788.}

\begin{abstract}
    Codes which have a finite field $\F_{q^m}$ as their alphabet but which are only linear over a subfield $\F_q$ are a topic of much recent interest due to their utility in constructing quantum error correcting codes. In this article, we find generators for trace dual spaces of different families of $\F_q$-linear codes over $\F_{q^2}$. In particular, given the field extension $\F_q\leq \F_{q^2}$ with $q$ an odd prime power, we determine the trace Euclidean and trace Hermitian dual codes for the general $\F_q$-linear cyclic $\F_{q^2}$-code. In addition, we also determine the trace Euclidean and trace Hermitian duals for general $\F_q$-linear skew cyclic $\F_{q^2}$-codes, which are defined to be left $\F_q[X]$-submodules of $\F_{q^2}[X;\sigma]/(X^n-1)$, where $\sigma$ denotes the Frobenius automorphism and $\F_{q^2}[X;\sigma]$ the induced skew polynomial ring.
\end{abstract}
\keywords{skew-cyclic codes, skew polynomials, $\F_q$-linear $\F_{q^m}$-codes}
\subjclass[2020]{94B05, 94B15, 11T71, 16S36}
\maketitle

\section{Introduction}\label{intro}
Additive codes were first investigated in 1971 by Delsarte in \cite{delsarte1971}. These are codes which are closed under the addition operation of the alphabet $\F_{q^m}$ but fail to be closed under the scalar multiplication by $\F_{q^m}$. In the present article, we consider additive codes which are closed under scalar multiplication by some subfield $\F_q$. In particular, these codes form $\F_q$-linear subspaces of $\F_{q^m}^n$, which we will refer to as \emph{$\F_q$-linear $\F_{q^m}$-codes}, or as ``nonlinear'' codes when the alphabet and subfield are understood. In the realm of quantum error-correcting codes, there is much interest in nonlinear codes whose alphabet is a degree $2$ extension of $\F_q$, i.e., $\F_{q^2}$, see for example the construction of Ashikhmin and Knill in \cite{Ashikhmin2001}. Later work from Ketkar et al. \cite{Ketkar2006} established an intimate connection between the existence of quantum error-correcting codes and $\F_q$-linear $\F_{q^2}$-codes equipped with an appropriate inner product.

Another direction for the construction of quantum error-correcting codes is the Calderbank-Shor-Steane (CSS) Construction which uses two codes over $\F_q$, $\mathcal C_1$ and $\mathcal C_2$, such that $\C_2^\perp \subseteq \C_1$. For full details of the construction, see \cite{Calderbank1998}. Further work, especially ``Quantum Construction X'' introduced by Lison{\v e}k and Singh \cite{lisonek2013}, established the ``hull'' $\C \cap \C^\perp$ of a code $\C$ as an important property to study. As noted in the article by Choi et al. \cite{Choi2023}, nonlinear codes are well positioned to have an ``additive complementary dual,'' where $\C \cap \C^{\perp} = \{0\}$.

Among the most ubiquitous of codes in use today are cyclic codes and their generalizations. Their ubiquity is due in part to their rich algebraic structure. A code $\C$ is said to be \emph{cyclic} if, whenever $(c_0, c_1, \ldots, c_{n-1}) \in \C$, then $(c_{n-1}, c_0, \ldots, c_{n-2}) \in \C$. Algebraically, given a finite field, $\F_q$, define the quotient ring
\[P_n = \F_q[X]/(X^n - 1),\] where $(X^n - 1)$ denotes the principal ideal of $\F_q[X]$ generated by the polynomial $X^n - 1$. The (two-sided) ideals of $P_n$ are isomorphic to the cyclic codes of length $n$ over $\F_q$. These ideals are principal, so the cyclic codes of length $n$ over $\F_q$ are generated by the polynomials $g(x) \in \F_q[x]$ which divide $x^n - 1$. Moreover the Euclidean dual of a cyclic is again a cyclic code generated by the reciprocal polynomial of $(x^n - 1)/g(x)$. It was shown in \cite{Verma-Sharma:2024} that $\F_q$-linear, cyclic $\F_q^2$ codes and their trace-Euclidean and trace-Hermitian duals are generated by two polynomials. 

Many generalizations of cyclic codes have been studied; among those of most relevance to the present work are the additive cyclic codes first studied in a more generalized framework by Bierbrauer in 2012, \cite{Bierbrauer2012}, and in a more specific setting by Huffman in \cite{Huffman:2010} and \cite{Huffman:2013}. In those articles, Huffman considers $\F_q$-linear $\F_{q^t}$-codes, employing trace operators and trace inner products to define their trace dual spaces. He then enumerates the cyclic self-orthogonal and cyclic self-dual codes with respect to these trace inner products. In \cite{Shi2008}, the authors investigate additive cyclic and additive complementary dual codes of odd length over $\F_4$ with respect to these trace inner products. In addition to the above articles by Huffman, the reader is encouraged to consult \cite{Bhunia2025} for a wide study of additive codes whose alphabets come from finite fields.

A separate direction of generalization for cyclic codes comes in the form of skew cyclic codes. These codes were first defined by Boucher, Geiselmann, and Ulmer in \cite{Boucher-Geiselmann-Ulmer:2007} as the left ideals of the ring $\F_{q^m}[X; \sigma]/(X^n - 1)$, where $\sigma: \F_{q^m} \to \F_{q^m}$ is an automorphism with fixed field $\F_{q}$ and $\sigma^n=\operatorname{id}$. This has grown into a large field of study; please consult \cite{Gluesing-Luerssen:2021} for a survey. In \cite{BHR_1}, the authors of the present article studied codes which are simultaneously $\F_q$-linear $\F_{q^m}$-codes and skew cyclic codes. As the degree of the field extension plays a key role in the classification of such codes, we will choose to use ``$\F_{q}$-linear $\F_{q^m}$-codes'' as opposed to ``additive codes'' to emphasize this fact.

In \cite{Verma-Sharma:2024}, for $q$ any odd prime power, the authors study additive cyclic codes over $\F_{q^2}$, which possess the same structural properties as the $\F_q$-linear cyclic $\F_{q^2}$-codes defined in \cite{BHR_1}. They present a general characterization of these codes by constructing explicit generators. In addition, for cyclic codes of length $n$ with $\gcd(n,q)=1$, they derive generators for the trace dual spaces of a specific family of these codes, though the problem was left unsolved in the general case.

In Section~\ref{sec:NonlinearCyclic} of this paper, we introduce terminology and notation that we use in the remainder of the article. In Section~\ref{sec:dualcycliccodes} we demonstrate that the hypothesis of Theorem 4.8 in \cite{Verma-Sharma:2024} corresponds to a collection of codes whose trace Euclidean dual space has a straightforward description, see \cite[Corollary 4.2]{Verma-Sharma:2024}. Additionally, we derive generators for the trace Euclidean and trace Hermitian duals of a general $\F_q$-linear cyclic $\F_{q^2}$-code while also dropping the condition $\gcd(n,q)=1$ on the length $n$ of the code. 

Finally, in Section~\ref{sec:dualskewcyclicodes}, we study $\F_q$-linear skew cyclic $\F_{q^2}$-codes of length $n$. We find the general structure of these codes by providing two generators as left $\F_q[X]$-submodules of $\F_{q^2}[X;\sigma]/(X^n-1)$. Then, we derive generators for the trace Euclidean and trace Hermitian duals of a general $\F_q$-linear skew cyclic $\F_{q^2}$-code of length $n$.

\section{\texorpdfstring{$\F_q$}{Fq}-linear Cyclic Codes over \texorpdfstring{$\F_{q^2}$}{Fq2}}\label{sec:NonlinearCyclic}

Consider a field extension $\F_q\leq \F_{q^2}$ of extension degree $2$, where $q$ is an odd prime power. It is well known that there is a nontrivial automorphism of $\F_{q^2}$ which fixes $\F_q$, namely the \emph{Frobenius automorphism} $\sigma: \F_{q^2} \to \F_{q^2}$ given by $\sigma(x) := x^q$. Moreover, we may define the \emph{trace} $\operatorname{Tr}: \F_{q^2} \to \F_q$ by $\operatorname{Tr}(x):= x+ x^q$ for all $x\in \F_{q^2}$.

The following lemma is a straightforward application of the theory of finite fields of odd prime characteristic. We refer the reader to \cite{1997-lidl-nied} or \cite{Hou2018} for a proof and further discussion. 

\begin{lemma}
There exists $\gamma \in \F_{q^2}\setminus \F_q$ such that $\tr(\gamma)=0$. Thus, in particular, $\F_{q^2}$ may be written as
\[\F_{q^2} = \F_q + \gamma \F_q.\]
\end{lemma}

\noindent As a result of this lemma, the polynomial ring $\F_{q^2}[X]$ can be decomposed as
\[\F_{q^2}[X] = \F_q [X]+ \gamma \F_q[X]\]
by rewriting each coefficient in terms of the basis $\{1, \gamma\}$ of $\F_{q^2}$ over $\F_q$.

In addition, we note the following basic properties of $\gamma$ for later reference.

\begin{lemma}\label{gammasquare}
We have $\gamma^2 \in \F_q$. In particular, $\tr(\gamma^2)= 2\gamma^2 \ne 0$. Moreover, $\gamma^{q+1}=-\gamma^2\in \F_q$.
\end{lemma}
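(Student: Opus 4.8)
The plan is to leverage the defining property $\gamma^q = -\gamma$ and the fact that the fixed field of $\sigma$ is exactly $\F_q$. First I would show $\gamma^2 \in \F_q$ by checking that $\gamma^2$ is fixed by the Frobenius automorphism: since $\sigma(\gamma^2) = (\gamma^2)^q = (\gamma^q)^2 = (-\gamma)^2 = \gamma^2$, and the fixed field of $\sigma$ is precisely $\F_q$, we conclude $\gamma^2 \in \F_q$. This is the cleanest route and avoids any casework.

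Next I would compute the trace. Since $\gamma^2 \in \F_q$, we have $\operatorname{Tr}(\gamma^2) = \gamma^2 + (\gamma^2)^q = \gamma^2 + \gamma^2 = 2\gamma^2$. To see that this is nonzero, note that $\gamma \notin \F_q$ forces $\gamma \neq 0$, so $\gamma^2 \neq 0$; and because $q$ is an odd prime power, the characteristic is odd, so $2 \neq 0$ in $\F_q$. Hence $2\gamma^2 \neq 0$.

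Finally, for the last claim I would simply write $\gamma^{q+1} = \gamma^q \cdot \gamma = (-\gamma)\cdot \gamma = -\gamma^2$, which lies in $\F_q$ by the first part.

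I do not anticipate a serious obstacle here; the only point requiring care is making explicit use of the hypothesis that $q$ is odd, which is exactly what guarantees $2 \neq 0$ and hence $\operatorname{Tr}(\gamma^2) \neq 0$. Everything else follows directly from the relation $\gamma^q = -\gamma$ together with the characterization of $\F_q$ as the fixed field of $\sigma$.
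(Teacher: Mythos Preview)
Your proof is correct and follows essentially the same route as the paper: both derive $\gamma^2\in\F_q$ from $\gamma^q=-\gamma$ (the paper via $(\gamma^2)^{q-1}=1$, you via the equivalent fixed-field formulation $\sigma(\gamma^2)=\gamma^2$), and both compute $\gamma^{q+1}=\gamma^q\gamma=-\gamma^2$ identically. Your explicit remark that $q$ odd forces $2\ne 0$ is a welcome clarification the paper leaves implicit.
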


\begin{proof}
 $\tr(\gamma)=0$ implies $\gamma^q=-\gamma$. Since $\gamma \ne 0$, we have $\gamma^{q-1}=-1$ and $\left(\gamma^2\right)^{q-1}=1.$ Thus $\gamma \in \F_{q^2} \setminus \F_q$, but $\gamma^2 \in \F_q$, and $\tr(\gamma^2)= 2\gamma^2 \ne 0$ follows. Furthermore, $\gamma^{q+1}=\gamma^q \gamma = (-\gamma) \cdot \gamma = -\gamma^2.$
\end{proof}

We continue with a quick review of cyclic codes.

\begin{definition}
    An \emph{$\F_q$-linear cyclic $\F_{q^2}$-code of length $n$} is an $\F_q$-linear subspace $\C$ of $\F_{q^2}^n$ that satisfies
\[(c_0, c_1, \ldots, c_{n-1}) \in \C \implies (c_{n-1},c_0,\ldots,c_{n-2})\in \C.\]    
\end{definition}

In order to study dual codes of $\F_q$-linear cyclic $\F_{q^2}$-codes of length $n$, we will make use of polynomials with coefficients in $\F_{q^2}$ and some associated polynomial algebras.
Denote
\[R_n := \F_{q^2}[X]/\left( X^n-1\right),\qquad\ \]
\[ P_n:=\F_{q}[X]/\left( X^n-1\right)\subseteq R_n,\]
where $\F_{q^2}[X]$ and $\F_{q}[X]$ are the rings of polynomials with coefficients in $\F_{q^2}$ and $\F_q$, respectively. 

\begin{definition}
    Let $f(X)= \sum_{i=0}^m f_i X^i \in \F_{q} [X]$ with $f_m\ne 0$, $m\geq0$. The \emph{reciprocal polynomial} of $f(X)$ is $f^*(X):= \sum_{i=0}^m f_i X^{m-i}$.
   We also define $\widehat{f}(X):= f(X^{n-1})$, and for the zero polynomial $f(X)=0$ we set $f^*(X)= \widehat{f}(X) =0$. For $\deg(f) \le n$, notice that
   \[ \widehat{f}(X) +(X^n-1) = X^{n-\deg f(X)} f^*(X) +(X^n-1) \]
   in $R_n$.
\end{definition}

We note the following two lemmas about reciprocal polynomials which will be used in the proof of Proposition \ref{prop:3.3}.

\begin{lemma}\label{lem:singlestar}
    Let $f(X),g(X)\in \F_q[X]$. Then there exist $k,\ell\in \Z$ such that $\left(f+g\right)^* =X^k f^* +X^\ell g^*$.
\end{lemma}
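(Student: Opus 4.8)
The plan is to reduce everything to the single-variable substitution formula $f^*(X) = X^{\deg f}\, f(X^{-1})$, which follows directly from the definition: if $f(X)=\sum_{i=0}^m f_i X^i$ with $f_m\neq 0$, then $X^m f(X^{-1}) = \sum_{i=0}^m f_i X^{m-i} = f^*(X)$. I will read this identity, and the conclusion of the lemma, inside the Laurent polynomial ring $\F_q[X,X^{-1}]$ (equivalently, modulo $X^n-1$ in $R_n$, where $X$ is a unit), so that negative powers of $X$ are permitted; I then check that the negative powers cancel, leaving the genuine polynomial $(f+g)^*$ on the right-hand side.

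First I would dispose of the degenerate cases. If $f=0$ or $g=0$, the claim is immediate from the convention $0^*=0$ (for instance, with $g=0$ take $k=\ell=0$). If $f+g=0$ but $f\neq 0$, then $g=-f$, and since $(-f)^*=-f^*$ one has $X^0 f^* + X^0 g^* = f^* - f^* = 0 = (f+g)^*$, so $k=\ell=0$ works; more generally any choice with $k=\ell$ succeeds here.

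For the main case, assume $f$, $g$, and $f+g$ are all nonzero and set $m=\deg f$, $p=\deg g$, and $d=\deg(f+g)$. Using the substitution formula together with the linearity of evaluation at $X^{-1}$, I compute
\[
(f+g)^*(X) = X^d (f+g)(X^{-1}) = X^d f(X^{-1}) + X^d g(X^{-1}) = X^{d-m} f^*(X) + X^{d-p} g^*(X),
\]
where in the last step I rewrote $X^d f(X^{-1}) = X^{d-m}\bigl(X^m f(X^{-1})\bigr) = X^{d-m} f^*(X)$ and similarly for $g$. Taking $k=d-m$ and $\ell=d-p$ then yields the claim.

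The only genuine subtlety — and the step I expect to require care — is the bookkeeping of degrees: addition can lower the degree, so $d$ need not equal $\max(m,p)$. When the leading terms of $f$ and $g$ cancel we have $d<\max(m,p)$, which forces $k$ or $\ell$ to be negative; a small example (e.g. $f=X^2+X$, $g=-X^2+1$) shows these negative exponents cannot be avoided while keeping the stated two-term shape, so the Laurent (equivalently $R_n$) interpretation is essential. Once that interpretation is fixed, the displayed line is an equality of Laurent polynomials whose negative-degree contributions cancel, leaving the honest polynomial $(f+g)^*$ on both sides, and the proof is complete.
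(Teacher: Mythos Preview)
Your argument is correct and lands on exactly the same exponents as the paper: with $d=\deg(f+g)$, $m=\deg f$, $p=\deg g$ you get $k=d-m$, $\ell=d-p$, which under the paper's normalization $m\ge p$ and $D:=m-d$ becomes $k=-D$, $\ell=m-p-D$. The difference is in execution rather than strategy. The paper pads $g$ with zero coefficients up to degree $m$, defines $D$ to locate the first surviving top coefficient of $f+g$, and then unwinds the reciprocal index by index; you instead invoke the Laurent identity $f^*(X)=X^{\deg f}f(X^{-1})$ and let linearity of evaluation at $X^{-1}$ collapse the computation to one line. Your route is the cleaner one and makes transparent why negative exponents must appear when leading terms cancel, while the paper's version keeps the coefficient bookkeeping fully visible. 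One minor wording slip in your final paragraph: it is the two summands $X^{d-m}f^*$ and $X^{d-p}g^*$ on the right whose negative-degree parts cancel against each other; the left-hand side $(f+g)^*$ is an honest polynomial from the outset, so ``on both sides'' is not quite the right phrase.
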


\begin{proof}
     We will focus on the general case where all of $f$, $g$, and $f+g$ are nonzero polynomials. The cases where $f(X)=0$ or $g(X)=0$ or $(f+g)(X)=0$ are straightforward.

    Consider $f(X) =\sum_{i=0}^m f_i X^i$ and $g(X) =\sum_{i=0}^n g_iX^i$, where $f_m, g_n\ne 0$. Then, $f^*(X) =\sum_{i=0}^m f_i X^{m-i}$ and $g^*(X) =\sum_{i=0}^n g_iX^{n-i}$. Without loss of generality, assume that $m \geq n$. Define $g_i =0$ for $i=n+1, \ldots, m$ to write $g(X) =\sum_{i=0}^m g_iX^i$.
    
    Define $D = \min \left\{ i \geq 0 : f_{m-i}+  g_{m-i} \ne 0 \right\}$. As $f+g \ne 0$, $D$ is well-defined. Then,
 \[ f(X)+ g(X)= \sum_{i=0}^m \left(f_i + g_i\right)X^i = \sum_{i=0}^{m-D} \left(f_i + g_i\right)X^i.\]   
 The reciprocal polynomial of $(f+g)$ is
 \begin{align*}
     \left(f+g\right)^* &= \sum_{i=0}^{m-D} \left(f_i + g_i\right) X^{m-D -i} \\
     &= \sum_{i=0}^{m} \left(f_i + g_i\right) X^{m-D -i}, \mbox{ because }f_i + g_i=0\mbox{ for }i=m-D+1, \ldots, m, \\
     & = X^{-D} \sum_{i=0}^{m} \left(f_i + g_i\right) X^{m -i}\\
     & = X^{-D} \sum_{i=0}^{m} f_i X^{m -i} + X^{-D} \sum_{i=0}^{m} g_i X^{m -i} \\
     & = X^{-D} f^* (X) + X^{-D} \sum_{i=0}^{n} g_i X^{m -i}, \mbox{ because }g_i=0\mbox{ for }i=n+1, \ldots, m, \\
     & =  X^{-D} f^* (X) + X^{-D} X^{m-n} \sum_{i=0}^{n} g_i X^{n -i} \\
     & = X^{-D} f^* (X) + X^{m-n-D} g^*(X). \qedhere
 \end{align*}
\end{proof}

\begin{lemma}\label{lem:doublestar}
    Let $f(X)= \sum_{i=0}^m f_i X^i\in \F_q[X]$ a nonzero polynomial and $k=\min \left\{ i: f_i \ne 0 \right\}$. Then $X^k f^{**}(X)= f(X)$.
\end{lemma}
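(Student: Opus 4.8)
The plan is to compute $f^{**}$ directly from the definition of the reciprocal polynomial, tracking carefully how the degree behaves under the operation. The key point to keep in mind is that the definition of $g^*$ for a polynomial $g$ measures degree from the \emph{actual} leading term (the largest index with nonzero coefficient), so applying $*$ twice is not simply a symmetric reindexing unless the constant term of $f$ is already nonzero.

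First I would write out $f^*$. Since $f(X) = \sum_{i=0}^m f_i X^i$ with $f_m \ne 0$, the definition gives $f^*(X) = \sum_{i=0}^m f_i X^{m-i}$, and substituting $j = m-i$ shows that the coefficient of $X^j$ in $f^*$ is $f_{m-j}$. The crucial observation is then to identify $\deg f^*$: the lowest-index nonzero coefficient of $f$, namely $f_k$ with $k = \min\{i : f_i \ne 0\}$, becomes the coefficient of $X^{m-k}$ in $f^*$, and no higher power appears because $f_i = 0$ for $i < k$. Hence $\deg f^* = m - k$ with leading coefficient $f_k \ne 0$.

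Next I would apply the reciprocal operation again, now honoring the degree $m - k$ rather than $m$. Writing $f^*(X) = \sum_{j=0}^{m-k} f_{m-j} X^j$ and invoking the definition gives $f^{**}(X) = \sum_{j=0}^{m-k} f_{m-j} X^{(m-k)-j}$; reindexing by $\ell = (m-k) - j$ collapses this to $f^{**}(X) = \sum_{\ell=0}^{m-k} f_{k+\ell} X^\ell$. Multiplying by $X^k$ and reindexing with $i = k + \ell$ yields $X^k f^{**}(X) = \sum_{i=k}^m f_i X^i$, which equals $f(X)$ precisely because $f_i = 0$ for every $i < k$.

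The only real obstacle is bookkeeping: one must resist treating $*$ as an outright involution and instead respect that $*$ always reads off degree from the top nonzero coefficient. The factor $X^k$ is exactly what restores the low-order zero coefficients that the double reciprocal discards; when $k = 0$ the statement degenerates to the familiar $f^{**} = f$.
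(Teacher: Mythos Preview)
Your proof is correct and follows essentially the same direct computation as the paper: both compute $f^*$, observe that $\deg f^* = m-k$ because $f_k$ becomes the new leading coefficient, then apply the reciprocal once more and reindex to obtain $f^{**}(X)=\sum_{\ell=0}^{m-k} f_{k+\ell}X^\ell$, from which $X^k f^{**}(X)=f(X)$ is immediate. Your write-up is slightly more explicit about the reindexing, but the argument is the same.
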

\begin{proof}
    Since $f_0=\ldots =f_{k-1}=0$, we have that
    \[ f^* (X)= \sum_{i=0}^m f_i X^{m-i}=\sum_{i=k}^m f_i X^{m-i}=  \sum_{j=0}^{m-k} f_{j+k} X^{m-k-j}. \]
Thus, $\deg f^{**}(X)= m-k$ and
\[X^k f^{**}(X)= X^k \sum_{j=0}^{m-k} f_{j+k} X^{j}=  \sum_{j=0}^{m-k} f_{j+k} X^{j+k}=\sum_{i=k}^m  f_i X^i = f(X). \qedhere \]
\end{proof}

We will examine four natural inner products on our $\F_q$-linear cyclic codes over $\F_{q^2}$. Two of these, the Euclidean and Hermitian inner products are widely studied in the context of linear codes. As with \cite{Verma-Sharma:2024}, we also define trace Euclidean and trace Hermitian inner products which are more appropriate for nonlinear codes.

Given 
$\boldsymbol{x},\boldsymbol{y}\in \F_{q^{2}}^n$, we define the $\emph{Euclidean inner product}$ $\langle \boldsymbol{x},\boldsymbol{y}\rangle$ by
\[\langle \boldsymbol{x},\boldsymbol{y}\rangle=\boldsymbol{x}\cdot \boldsymbol{y} = \sum_{i=1}^{n}x_i y_i \in \F_{q^{2}}. \]
We also define the \emph{Hermitian inner product}  $\langle\boldsymbol{x},\boldsymbol{y}\rangle_H$ by
\[\langle \boldsymbol{x},\boldsymbol{y}\rangle_H=\sum_{i=1}^{n}x_i y_i^{q}\in \F_{q^2}.  \]

Any element of $R_n =\F_{q^2}[X]/\left( X^n-1\right)$ can uniquely be written as $f(X)+(X^n-1)$ for a suitable representative $f(X)=\sum_{i=0}^{n-1}f_i X^i\in \F_{q^2}[X]$, and we will often conflate $f(X)+(X^n-1)\in R_n$ with its representative $f(X)=\sum_{i=0}^{n-1}f_i X^i\in \F_{q^2}[X]$ for ease of notation. Given $f(X)+(X^n-1)$, $g(X)+(X^n-1)\in R_n$ with $f(X)=\sum_{i=0}^{n-1}f_i X^i$ and $g(X)=\sum_{i=0}^{n-1}g_iX^i$, we define the \emph{Euclidean inner product} as
\[f(X) \star g(X):= \langle (f_0, f_1,\ldots, f_{n-1}), (g_0, g_1, \ldots, g_{n-1})  \rangle \]
and the \emph{Hermitian inner product} as
\[f(X) \bullet g(X):= \langle (f_0, f_1,\ldots, f_{n-1}), (g_0, g_1, \ldots, g_{n-1})  \rangle_H. \]

The Euclidean inner product is $\F_{q^2}$-bilinear while the Hermitian inner product is $\F_{q^2}$-linear in $f(X)$  and only $\F_{q}$-linear in $g(X)$. For $f(X)+(X^n-1)$, $g(X)+(X^n-1)\in R_n$ we also define two $\F_q$-bilinear inner products: the \emph{trace Euclidean inner product} is
\[f(X) \circledast  g(X):= \tr \left( f(X) \star g(X) \right) \]
and the \emph{trace Hermitian inner product} is
\[f(X)  \boxdot g(X):= \tr \left( f(X) \bullet g(X) \right).\]

\begin{proposition} \label{prop 2.6}
    Given $f(X)+(X^n-1)$, $g(X)+(X^n-1)$, $q(X)+(X^n-1)\in P_n=\F_{q}[X]/\left( X^n-1\right)$ and $\alpha \in \F_{q^2}$, the previously defined inner products satisfy the following properties which will be used for Theorems~\ref{main cyclic} and~\ref{main skew cyclic}.
    \begin{align}
            f(-X)\star g(-X) & = f(X)\star g(X),  \label{lem:flipsings}\\
            \left( f(X)q(X) \right)\star g(X) & = f(X)\star  \left( g(X)\widehat{q}(X) \right), \mbox{ see \cite[Lemma 4.6]{Verma-Sharma:2024},} \label{lem:passtheq}\\
            f(X)\circledast g(X)&= 2 \left( f(X)\star g(X)\right), \notag\\
            f(X)\circledast \alpha g(X)&= \alpha f(X)\circledast  g(X)= \tr(\alpha)\left( f(X)\star g(X)\right) , \notag \\
             \alpha f(X)\circledast \alpha g(X)&=  \tr(\alpha^2)\left( f(X)\star g(X)\right) , \notag \\
             f(X) \bullet  g(X)&= f(X)\star  g(X),\notag \\
             f(X) \boxdot  g(X)&= 2 \left( f(X)\star  g(X)\right),\notag \\ 
            f(X) \boxdot  \alpha g(X)&= \alpha f(X) \boxdot   g(X)= \tr(\alpha)\left( f(X)\star  g(X)\right) , \notag \\
             \alpha f(X) \boxdot  \alpha g(X)&=  \tr(\alpha^{q+1})\left( f(X)\star  g(X)\right) \notag.  
    \end{align}
\end{proposition}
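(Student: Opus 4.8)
The plan is to reduce every identity to two elementary facts and then read each line off by a direct coefficient computation. The first fact is that $f$, $g$, and $q$ lie in $P_n$, so their coefficients belong to $\F_q$ and are therefore fixed by $\sigma$; consequently $f(X)\star g(X)=\sum_i f_i g_i$ is an element of $\F_q$. The second fact is the behaviour of $\tr$: it is $\F_q$-linear, it satisfies $\tr(\alpha^q)=\tr(\alpha)$ for all $\alpha\in\F_{q^2}$ (since $\alpha^{q^2}=\alpha$), and on $\F_q$ it acts as doubling, $\tr(x)=x+x^q=2x$. I would state these three observations first as a short preamble, as they are invoked repeatedly.

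I would then verify the identities in the listed order. For \eqref{lem:flipsings}, the $i$-th coefficients of $f(-X)$ and $g(-X)$ are $(-1)^i f_i$ and $(-1)^i g_i$, so each term of the sum acquires a factor $(-1)^{2i}=1$ and the inner product is unchanged; no reduction modulo $X^n-1$ is needed because substituting $-X$ preserves degrees. For \eqref{lem:passtheq} I would simply cite \cite[Lemma 4.6]{Verma-Sharma:2024}, as indicated. The three trace-Euclidean lines then follow by pulling scalars through the $\F_{q^2}$-bilinear product $\star$ and using $f\star g\in\F_q$: first $f\circledast g=\tr(f\star g)=2(f\star g)$; next, since $(\alpha f)\star g=f\star(\alpha g)=\alpha(f\star g)$, the $\F_q$-linearity of $\tr$ gives $f\circledast\alpha g=\alpha f\circledast g=\tr(\alpha(f\star g))=\tr(\alpha)(f\star g)$; and $(\alpha f)\star(\alpha g)=\alpha^2(f\star g)$ yields $\alpha f\circledast\alpha g=\tr(\alpha^2)(f\star g)$.

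For the Hermitian block, the linchpin is $f\bullet g=f\star g$: since each $g_i\in\F_q$ we have $g_i^q=g_i$, so $\sum_i f_i g_i^q=\sum_i f_i g_i$. From this, $f\boxdot g=\tr(f\bullet g)=\tr(f\star g)=2(f\star g)$ is immediate. The two mixed lines require tracking where the conjugation lands: $f\bullet(\alpha g)=\sum_i f_i(\alpha g_i)^q=\alpha^q(f\star g)$ whereas $(\alpha f)\bullet g=\alpha(f\star g)$, so applying $\tr$ and using $\tr(\alpha^q)=\tr(\alpha)$ collapses both to $\tr(\alpha)(f\star g)$. Finally $(\alpha f)\bullet(\alpha g)=\alpha\alpha^q(f\star g)=\alpha^{q+1}(f\star g)$, and since $f\star g\in\F_q$ we obtain $\alpha f\boxdot\alpha g=\tr(\alpha^{q+1})(f\star g)$.

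There is no genuine obstacle here; the work is entirely bookkeeping. The one place that rewards care is the Hermitian mixed identity: because $\bullet$ is only $\F_q$-linear (not $\F_{q^2}$-linear) in its second argument, the factor $\alpha$ in $f\boxdot\alpha g$ must be conjugated to $\alpha^q$ before the trace is taken, and it is precisely the Frobenius-invariance $\tr(\alpha^q)=\tr(\alpha)$ that forces $f\boxdot\alpha g$ and $\alpha f\boxdot g$ to agree. Keeping the hypothesis ``coefficients in $\F_q$'' visible throughout---so that $g_i^q=g_i$ and $f\star g\in\F_q$---is what keeps every line a one-step computation.
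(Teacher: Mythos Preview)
Your proof is correct. The paper itself does not supply a proof of Proposition~\ref{prop 2.6}; it states the identities and moves on, treating them as routine consequences of the definitions (with \eqref{lem:passtheq} deferred to \cite[Lemma 4.6]{Verma-Sharma:2024}). What you have written is exactly the verification the paper omits: the reduction to ``coefficients lie in $\F_q$, so $f\star g\in\F_q$ and $g_i^q=g_i$'' together with the $\F_q$-linearity and Frobenius-invariance of $\tr$ is the intended mechanism, and your handling of the Hermitian mixed line---conjugating $\alpha$ to $\alpha^q$ in the second slot and then invoking $\tr(\alpha^q)=\tr(\alpha)$---is the only place requiring any care, which you flag correctly.
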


Let $\C\subseteq R_n$ be an $\F_q$-linear cyclic $\F_{q^2}$-code of length $n$, i.e.,  $\C$ is an $\F_{q}[X]$-submodule of $R_n$. The \emph{Euclidean dual}, \emph{trace Euclidean dual}, \emph{Hermitian dual}, and \emph{trace Hermitian dual} of $\C$, respectively, are defined as
\[
\begin{gathered}
\C^{\perp_E}=\left\{f(X) \in R_{n}: f(X) \star c(X)=0 \text { for all } c(X) \in \C\right\}, \\
\C^{\perp_{T E}}=\left\{f(X) \in R_{n}: f(X) \circledast c(X)=0 \text { for all } c(X) \in \C\right\}, \\
\C^{\perp_H}=\left\{f(X) \in R_{n}: f(X) \bullet c(X)=0 \text { for all } c(X) \in \C\right\}, \\
\C^{\perp_{T H}}=\left\{f(X) \in R_{n}: f(X) \boxdot c(X)=0 \text { for all } c(X) \in \C\right\} .
\end{gathered}
\]

    Given $\C$ an $\F_q$-linear cyclic $\F_{q^2}$-code of length $n$ and $\F_q$-vector space dimension $k$, let $\langle \C\rangle$ denote the $\F_{q^2}$-linear closure of $\C$ and $k^*$ its $\F_{q^2}$-vector space dimension. If $b_1, \ldots, b_{k^*}$ is a basis of $\langle \C\rangle $ over $\F_{q^2}= \F_q +\gamma \F_q$, then $b_1, \ldots, b_{k^*},\gamma b_1, \ldots, \gamma b_{k^*}$ is a basis of $\langle \C\rangle$ over $\F_q$. Therefore, $\frac{k}{2}\leq k^* \leq k$.

\begin{example}
    Let $f(X) := \sum_{i=0}^{n-1}X^i\in \F_q[X]$. Then $\C= \langle f(X)\rangle_{\F_q}\subseteq R_n$, the generated $\F_q$-subspace, is an $\F_q$-linear cyclic $\F_{q^2}$-code of length $n$ with $k=1$, $\langle \C\rangle=\langle f(X)\rangle_{\F_{q^2}}$, and $k^* = 1=k$. On the other hand, if $\mathcal{C}= \langle f(X), \gamma f(X)\rangle_{\F_q}$, then $k=2$, $\langle \mathcal{C}\rangle=\langle f(X)\rangle_{\F_{q^2}}=\C$, and $k^* = 1=\frac{k}{2}$.
\end{example}

\begin{remark}
The Euclidean inner product $\star$ is $\F_{q^2}$-bilinear. In particular, we have $\C^{\perp_E}=\langle \C \rangle^{\perp_E}$ and
\[ \dim_{\F_{q^2}} \C^{\perp_E}=   \dim_{\F_{q^2}} \langle \C \rangle^{\perp_E} = n - \dim_{\F_{q^2}} \langle \C\rangle = n- k^*.  \]

Since $\F_{q^2}= \F_q + \gamma \F_q$, we have that $\dim_{\F_{q}} \C^{\perp_E}=2\left( n- k^* \right)= 2n-2k^*.$ Notice that, unlike stated in \cite[Section 2, p.1597]{Verma-Sharma:2024}, $\dim_{\F_{q}} \C + \dim_{\F_{q}} \C^{\perp_E}=2n$ if and only if $k=2k^*$. 
%
%
%
\end{remark}
%

The Lemma below describes the generators of linear cyclic $\F_q$-codes and of their Euclidean duals. A proof can be found in \cite[Section 4.2]{Huffman-Pless:2003}.
\begin{lemma}\label{euclideandual}
    Let $\C\subseteq P_n$ be a linear cyclic $\F_q$-code, Then $\C=\langle f(X)\rangle$ with $f(X)\mid X^n-1$. Moreover, if $g(X)=\dfrac{X^n-1}{f(X)}$ then $\C^{\perp_{E}}= \langle g^*(X)\rangle$.
\end{lemma}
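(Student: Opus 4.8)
\emph{Structure of $\C$.} For the first part I would treat $\C$ as an ideal of $P_n=\F_q[X]/(X^n-1)$ and use that $\F_q[X]$ is a principal ideal domain. Ideals of the quotient $P_n$ correspond bijectively to ideals of $\F_q[X]$ containing $(X^n-1)$; each such ideal is principal, say $(f(X))$, and the containment $(X^n-1)\subseteq(f(X))$ forces $f(X)\mid X^n-1$. Choosing $f$ monic of least degree then gives $\C=\langle f(X)\rangle$ with $f(X)\mid X^n-1$, and the basis $f,Xf,\dots,X^{\,n-\deg f-1}f$ shows $\dim_{\F_q}\C=n-\deg f$.

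\emph{Reduction of the dual.} Set $g(X)=(X^n-1)/f(X)$, so that $f(X)g(X)=X^n-1$ and $\dim_{\F_q}\C^{\perp_E}=n-\dim_{\F_q}\C=\deg f$. My plan is to reduce the claim to a single inclusion together with a dimension count. Since $g\mid X^n-1$ we have $g(0)\neq0$, hence $\deg g^*=\deg g$; moreover reciprocation is multiplicative on polynomials with nonzero constant term and $X^n-1$ is anti-self-reciprocal, so $g^*\mid X^n-1$ and $\langle g^*(X)\rangle$ is genuinely a cyclic code of dimension $n-\deg g^*=n-\deg g=\deg f$. It therefore suffices to prove $\langle g^*(X)\rangle\subseteq\C^{\perp_E}$, since matching $\F_q$-dimensions then force equality.

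\emph{The inclusion.} Using the remark following the definition of the reciprocal, $\widehat g=X^{\,n-\deg g}g^*$ in $R_n$, so $g^*$ and $\widehat g$ differ by the unit $X^{\deg g}$ and $\langle g^*(X)\rangle=\langle\widehat g(X)\rangle$; thus it is enough to show $\widehat g\in\C^{\perp_E}$. Writing an arbitrary codeword as $c=f(X)p(X)$ in $P_n$, and using that $h\mapsto\widehat h$ is the ring automorphism of $R_n$ induced by $X\mapsto X^{n-1}=X^{-1}$ (so $\widehat{gp}=\widehat g\,\widehat p$), I would chain the symmetry of $\star$ with two applications of \eqref{lem:passtheq}:
\[
\widehat g\star(fp)=(fp)\star\widehat g=f\star(\widehat g\,\widehat p)=f\star\widehat{gp}=(fgp)\star 1=\bigl((X^n-1)p\bigr)\star 1=0,
\]
the final equality holding because $fg=X^n-1\equiv0$ in $R_n$. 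This gives $\widehat g\star c=0$ for every $c\in\C$, hence $\langle g^*(X)\rangle=\langle\widehat g(X)\rangle\subseteq\C^{\perp_E}$, and the dimension count closes the argument.

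\emph{Anticipated obstacle.} I expect the main difficulty to lie in the bookkeeping of this last step: correctly identifying $g^*$ with a cyclic shift of $\widehat g$ and arranging the two invocations of \eqref{lem:passtheq} so that the product $fg$ collapses to $X^n-1=0$. A secondary point needing care is verifying that reciprocation respects divisibility by $X^n-1$, which is exactly what guarantees $\langle g^*(X)\rangle$ has the dimension required to conclude.
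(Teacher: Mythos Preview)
Your argument is correct. Note, however, that the paper does not give its own proof of this lemma at all: it simply cites \cite[Section~4.2]{Huffman-Pless:2003} and moves on. So there is no in-paper proof to compare against, only the standard textbook treatment.

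Your strategy---inclusion plus dimension count---is exactly the classical one. Where you depart slightly from the usual textbook presentation is in the verification of the inclusion $\langle g^*(X)\rangle\subseteq\C^{\perp_E}$. The standard argument (as in Huffman--Pless) typically writes out the coefficients of $g^*$ and of a generic codeword $X^if$ and checks the inner product directly, or equivalently observes that the coefficient of $X^0$ in $f(X)g(X)X^i$ vanishes for all $i$. You instead route everything through the identity \eqref{lem:passtheq} and the ring automorphism $h\mapsto\widehat h$ of $P_n$, which is a clean algebraic repackaging of the same computation and has the advantage of reusing machinery the paper has already set up. Both approaches are short; yours is arguably more in the spirit of the surrounding sections.

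The bookkeeping you flagged as a potential obstacle is handled correctly: $g(0)\ne0$ gives $\deg g^*=\deg g$ and $g^{**}=g$, so $g^*\mid X^n-1$ follows from $(X^n-1)^*=-(X^n-1)$, and the two applications of \eqref{lem:passtheq} in your chain are each instances of the stated identity with the roles of the three polynomials correctly assigned.
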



Let $\C$ be an $\F_q$-linear cyclic $\F_{q^2}$-code of length $n$. According to \cite{Verma-Sharma:2024}, there exist $w(X)$, $\ell(X)$, $f(X)$, $g(X)$, $q(X)\in \F_q[X]$ such that $\deg(q)<n$,
    \[X^n-1= w(X)\ell(X)f(X) g(X)\quad \mbox{ and, }\quad \C = \langle w(X) f(X)+ \gamma q(X), \gamma w(X)g(X)\rangle. \]

If also $\gcd(n,q)=1$, then $w(X)$ divides $q(X)$ and there exists $q_1(X)\in \F_q[X]$ such that
\[ \C = \langle w(X) f(X)+ \gamma w(X)q_1(X), \gamma w(X)g(X)\rangle.  \]

\section{\texorpdfstring{The Trace Dual of $\F_q$-linear $\F_{q^2}$ Cyclic Codes}{The Trace Dual of Fq-linear Fq2 Cyclic Codes}}\label{sec:dualcycliccodes} 
Next we discuss the trace Euclidean dual code of a special family of $\F_q$-linear cyclic $\F_{q^2}$-codes. This result is identical with \cite[Corollary 4.2]{Verma-Sharma:2024},
and we exhibit the proof mainly in preparation of the significantly more complex calculations necessary for proving Theorems \ref{main cyclic} and  \ref{main skew cyclic} later. Note, in particular, that we intentionally choose explicit calculations and direct methods over any dimensional arguments such as using the fact $\dim_{\F_q}\C^{\perp_{TE}}= 2n-k$.

\begin{theorem} \label{verma cor}
    Let $w(X), \ell(X), f(X),g(X)\in \F_q[X]$ such that 
    \[X^n-1= w(X) \ell(X) f(X)g(X).\] 
    
    Define the $\F_q$-linear cyclic $\F_{q^2}$-codes \[\C= \langle 
 w(X) f(X), \gamma w(X)g(X) \rangle\quad \mbox{ and }\quad \mathcal{D}= \langle 
 \ell^*(X) g^*(X), \gamma \ell^*(X)f^* (X) \rangle.\]  Then $\C^{\perp_{TE}}=\mathcal{D}.$
\end{theorem}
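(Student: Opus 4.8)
The plan is to compute $\C^{\perp_{TE}}$ directly and read off that it equals $\D$, deliberately avoiding the dimension formula $\dim_{\F_q}\C^{\perp_{TE}}=2n-k$. I would write an arbitrary element of $R_n$ as $d = d_0(X)+\gamma\,d_1(X)$ with $d_0,d_1\in P_n$, using $\F_{q^2}=\F_q+\gamma\F_q$. Since $\C$ is the $\F_q[X]$-module generated by $wf$ and $\gamma wg$ and $\circledast$ is $\F_q$-bilinear, the condition $d\in\C^{\perp_{TE}}$ is equivalent to $d\circledast(X^i wf)=0$ and $d\circledast(\gamma X^i wg)=0$ for all $i$. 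Expanding both pairings with the identities of Proposition~\ref{prop 2.6} and using $\tr(\gamma)=0$ kills the mixed $\F_q$--$\gamma\F_q$ terms, leaving $d\circledast(X^i wf)=2\,(d_0\star X^i wf)$ and, via $\tr(\gamma^2)=2\gamma^2$, $d\circledast(\gamma X^i wg)=2\gamma^2\,(d_1\star X^i wg)$.

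Next I would invoke that $q$ is odd, so $2\neq0$, together with Lemma~\ref{gammasquare}, which gives $2\gamma^2\neq0$. These nonvanishing factors let me strip the scalars, so the membership conditions decouple into $d_0\star(X^i wf)=0$ for all $i$ and $d_1\star(X^i wg)=0$ for all $i$ (equivalently, using \eqref{lem:passtheq}, $d_0\,\widehat{wf}=0$ and $d_1\,\widehat{wg}=0$ in $P_n$). Each condition says precisely that $d_0$, respectively $d_1$, lies in the Euclidean dual, computed inside $P_n$, of the $\F_q$-linear cyclic code $\langle wf\rangle$, respectively $\langle wg\rangle$.

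To identify these duals I would use the factorization $X^n-1=(wf)(\ell g)=(wg)(\ell f)$, which exhibits $wf\mid X^n-1$ and $wg\mid X^n-1$ with cofactors $\ell g$ and $\ell f$. Lemma~\ref{euclideandual} then yields $\langle wf\rangle^{\perp_E}=\langle(\ell g)^*\rangle=\langle \ell^* g^*\rangle$ and $\langle wg\rangle^{\perp_E}=\langle(\ell f)^*\rangle=\langle \ell^* f^*\rangle$, where $(\ell g)^*=\ell^* g^*$ and $(\ell f)^*=\ell^* f^*$ since the reciprocal of a product of nonzero polynomials is the product of the reciprocals. Thus $d_0\in\langle \ell^* g^*\rangle$ and $d_1\in\langle \ell^* f^*\rangle$ independently, giving $\C^{\perp_{TE}}=\langle \ell^* g^*\rangle+\gamma\langle \ell^* f^*\rangle=\langle \ell^* g^*,\gamma \ell^* f^*\rangle=\D$.

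Once the machinery is assembled the computation is essentially mechanical, so the two points needing genuine care are (i) the decoupling in the first step, which rests essentially on $\tr(\gamma)=0$ to annihilate the cross terms and on $2\neq0$, $2\gamma^2\neq0$ to remove the scalar factors, and (ii) bookkeeping the reciprocal-polynomial identities so that the cofactors $\ell g,\ell f$ of $wf,wg$ become the stated generators $\ell^* g^*,\ell^* f^*$. I would explicitly note that no hypothesis $\gcd(n,q)=1$ is required, because Lemma~\ref{euclideandual} (equivalently, the annihilator identity $\mathrm{Ann}_{P_n}(h)=\langle(X^n-1)/h\rangle$ for an exact divisor $h\mid X^n-1$) holds whether or not $X^n-1$ is squarefree. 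The hard part conceptually is not this theorem but anticipating that the clean decoupling will fail for the more general codes of Theorem~\ref{main cyclic}, whose first generator $wf+\gamma q$ genuinely mixes the two components; this is presumably why the identity-based direct calculation is set up here as a reusable template.
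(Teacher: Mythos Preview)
Your proposal is correct and follows essentially the same approach as the paper: decompose an arbitrary element as $d_0+\gamma d_1$, use $\tr(\gamma)=0$ and $\tr(\gamma^2)=2\gamma^2\ne 0$ from Proposition~\ref{prop 2.6} to decouple the two orthogonality conditions, and then apply Lemma~\ref{euclideandual} to each $\F_q$-cyclic piece. The only cosmetic difference is that you phrase the argument as a direct computation of $\C^{\perp_{TE}}$ via equivalences (yielding both inclusions at once), whereas the paper first establishes $\C^{\perp_{TE}}\subseteq\D$ and then notes that the reverse inclusion is immediate from the same calculations.
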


\begin{proof}
    To show $\C^{\perp_{TE}}\subseteq \mathcal{D}$, take $c(X)+ \gamma d(X)\in \C^{\perp_{TE}}$ with $c(X),d(X)\in \F_q[X]$. Then, for any $a(X), b(X)\in \F_q[X]$ we have that 
    \[ \left(a(X)w(X)f(X)+b(X)\gamma w(X)g(X)\right)\circledast \left( c(X) + \gamma d(X) \right) =0.\]

    In particular, if $a(X)=0$, then 
    \begin{align*}
        0 &= b(X)\gamma w(X)g(X)\circledast  c(X) + b(X)\gamma w(X)g(X)\circledast \gamma d(X) \\
        &= \tr{(\gamma)}[b(X) w(X)g(X)\star  c(X)] +\tr{(\gamma^2)}[b(X) w(X)g(X)\star  d(X)]\\
        &= \tr{(\gamma^2)} [b(X) w(X)g(X)\star  d(X)].
    \end{align*}
    Thus, $d(X)\in \langle \left[ \left(X^n-1\right)/\left(w(X) g(X)\right)\right]^* \rangle= \langle  \ell^*(X) f^*(X) \rangle$ by Lemma \eqref{euclideandual}.

    Besides, if $b(X)=0$, then 
    \begin{align*}
        0 &= a(X)w(X)f(X)\circledast  c(X) + a(X)w(X)f(X)\circledast \gamma d(X) \\
        &= a(X)w(X)f(X)\circledast  c(X) +\tr{(\gamma)} [a(X)w(X)f(X)\circledast d(X)] \\
        &= a(X)w(X)f(X)\circledast  c(X).
    \end{align*}
    Thus, $c(X)\in \langle \left[ \left(X^n-1\right)/\left(w(X) f(X)\right)\right]^* \rangle= \langle  \ell^*(X) g^*(X) \rangle$, and $\C^{\perp_{TE}}\subseteq \mathcal{D}$ follows. From the above calculations it is now easy to also verify $\mathcal{D}\subseteq \C^{\perp_{TE}}$.
\end{proof}

\begin{remark}
    Our next proposition shows that the conditions of \cite[Theorem 4.8]{Verma-Sharma:2024} are only satisfied by codes of the form $\C= \langle 
 w(X) f(X), \gamma w(X)g(X) \rangle$. In particular, \cite[Theorem 4.8]{Verma-Sharma:2024} reduces to the special case as described in \cite[Corollary 4.2]{Verma-Sharma:2024} and Theorem \ref{verma cor}. The mathematics necessary to demonstrate this is of quite some interest in its own rights.
\end{remark}

\begin{proposition}\label{prop:3.3}
    Let $w(X), \ell(X), f(X),g(X)\in \F_q[X]$ such that 
    \[X^n-1= w(X) \ell(X) f(X)g(X).\]
\begin{itemize}
\item[$(a)$] Let $q(X)\in \F_q[X]$.
    If $\widehat{q}(X)\equiv g^*(X) \mod \left( w^*(X)f^*(X) g^*(X) \right)$, then $g(X)$ divides $q(X)$, and 
    \[ \langle 
 w(X) f(X)+ \gamma w(X)q(X), \gamma w(X)g(X) \rangle = \langle 
 w(X) f(X), \gamma w(X)g(X) \rangle. \]
\item[$(b)$] Let $r(X)\in \F_q[X]$. If $r(X)\equiv f^*(X) \mod \left( w^*(X)f^*(X) g^*(X) \right)$, then $f^*(X)$ divides $r(X)$, and
    \[ \langle 
 \ell^*(X) g^*(X)+ \gamma \ell^*(X)r(X), \gamma \ell^*(X)f^*(X) \rangle = \langle 
 \ell^*(X) g^*(X), \gamma \ell^*(X)f^*(X) \rangle . \]
\end{itemize}
\end{proposition}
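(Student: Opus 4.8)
The plan is to reduce each of $(a)$ and $(b)$ to a single divisibility statement, after which the module equality follows from a one-line substitution. The structural fact I will use throughout is that the reciprocal map is multiplicative, $(ab)^* = a^*b^*$ (immediate from $h^*(X) = X^{\deg h}h(1/X)$ together with the fact that $\F_q[X]$ is a domain, so no leading-term cancellation occurs). Applying this to $X^n-1 = w(X)\ell(X)f(X)g(X)$ and using $(X^n-1)^* = -(X^n-1)$ yields $w^*\ell^* f^* g^* = -(X^n-1)$. In particular the modulus $M := w^*f^*g^*$ divides $X^n-1$, and both $f^*$ and $g^*$ divide $M$. I will also note that $g$ and $f$, being divisors of $X^n-1$, have nonzero constant term, so by Lemma~\ref{lem:doublestar} we have $g^{**}=g$ and $f^{**}=f$, while $g^*(0),f^*(0)\neq 0$.

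For part $(b)$ the divisibility is immediate: from $r\equiv f^*\pmod{M}$ I write $r-f^* = Mt$, and since $f^*\mid M$ I conclude $f^*\mid r$. Part $(a)$ is where the real work lies. Interpreting the hypothesis in $R_n$ and using the identity $\widehat{q}\equiv X^{n-\deg q}q^*\pmod{X^n-1}$ recorded after the definition of $\widehat{\,\cdot\,}$ (valid since $\deg q < n$), together with $M\mid X^n-1$, I obtain $X^{n-\deg q}q^*\equiv g^*\pmod{M}$. As $g^*\mid M$ this forces $g^*\mid X^{n-\deg q}q^*$, and since $g^*(0)\neq 0$ gives $\gcd(g^*,X^{n-\deg q})=1$, I get $g^*\mid q^*$. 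Passing back through reciprocals then upgrades this to $g\mid q$: factoring out the largest power of $X$ dividing $q$, applying $g^{**}=g$ and Lemma~\ref{lem:doublestar}, and using multiplicativity once more.

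With the divisibilities in hand, the module identities follow uniformly. In part $(a)$, writing $q = gh$, the element $w f+\gamma w q = wf + h\,(\gamma w g)$ lies in $\langle wf,\gamma wg\rangle$, and conversely $wf=(wf+\gamma wq)-h\,(\gamma wg)$ lies in $\langle wf+\gamma wq,\gamma wg\rangle$; since $\gamma wg$ is a generator of both, the two $\F_q[X]$-submodules of $R_n$ coincide. Part $(b)$ is identical with $r=f^*h$, replacing the triple $(wf,\,wq,\,wg)$ by $(\ell^*g^*,\,\ell^* r,\,\ell^* f^*)$.

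The main obstacle I anticipate is the clean passage from $g^*\mid q^*$ to $g\mid q$ in part $(a)$: reciprocals do not commute naively with divisibility, because $q$ may be divisible by $X$ and hence $q^{**}\neq q$ in general. One must strip the trailing $X$-power from $q$ before inverting, which is precisely the content of Lemma~\ref{lem:doublestar}; the hypothesis being phrased in terms of $\widehat{q}$ rather than $q$ directly is exactly what makes this step necessary in $(a)$ but not in $(b)$. Everything else is bookkeeping with reciprocals and the short module-containment argument above.
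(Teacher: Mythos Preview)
Your proof is correct. For part $(b)$ it is identical to the paper's. For part $(a)$ you take a slightly different route: the paper writes the congruence as an equation $X^{n-\deg q}q^* - g^* = a\, w^*f^*g^*$, applies the reciprocal to \emph{both sides} (invoking Lemma~\ref{lem:singlestar} to handle the sum on the left), and arrives at $X^k q = (X^l + a^* wf)g$ for some integers $k,l$, whence $g\mid q$ since $\gcd(g,X^k)=1$. You instead extract the divisibility $g^*\mid X^{n-\deg q}q^*$ directly from the congruence (using $g^*\mid M$), cancel the $X$-power via $g^*(0)\neq 0$ to get $g^*\mid q^*$, and only \emph{then} apply reciprocals to this simpler statement. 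Your version avoids Lemma~\ref{lem:singlestar} entirely, at the cost of making the passage $g^*\mid q^*\Rightarrow g\mid q$ explicit; the paper's route reaches $g\mid q$ in one stroke without ever isolating $g^*\mid q^*$. Both arguments rest on the same ingredients---multiplicativity of the reciprocal and the fact that divisors of $X^n-1$ have nonzero constant term---and are of comparable length. You also spell out the module equalities, which the paper leaves to the reader once the divisibilities are established.
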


\begin{proof}
    $(a)$ Since $X^n-1= w(X) \ell(X) f(X)g(X)$, we have that $w(X)$, $\ell(X)$, $f(X)$, and $g(X)$ have a nonzero constant term. Then, by Lemma \ref{lem:doublestar}, we have that $w^{**}(X)=w(X)$, $\ell^{**}(X)=\ell(X)$, $f^{**}(X)=f(X)$, and $g^{**}(X)=g(X)$.

    Notice that
    \begin{align*}
        &\widehat{q}(X)\equiv g^*(X) \mod \left( w^*(X)f^*(X) g^*(X) \right)\\
        & \implies X^{n-\deg q(X)}q^*(X)  \equiv g^*(X) \mod \left( w^*(X)f^*(X) g^*(X) \right) \\
        & \implies X^{n-\deg q(X)}q^*(X)  - g^*(X) = a(X) w^*(X)f^*(X) g^*(X) \mbox{ for some }a(X)\in \F_q[X]\\
 & \implies ( X^{n-\deg q(X)}q^*(X)  - g^*(X))^* = \left(a(X) w^*(X)f^*(X) g^*(X)\right)^* \\
 & \implies X^k q^{**}(X)- X^l g^{**}(X) = a^*(X) w^{**}(X)f^{**}(X) g^{**}(X) \mbox{ for some }k,l\in \Z, \mbox{ by Lemma }\ref{lem:singlestar} \\
 & \implies X^k q(X)- X^l g(X) = a^*(X) w(X)f(X) g(X) \mbox{ for some }k,l \in \Z \\
 & \implies X^k q(X) = ( X^l + a^*(X) w(X)f(X) ) g(X).
    \end{align*}
Since $\gcd(g(X), X^k)=1$, we have that $g(X)\mid q(X)$.

$(b)$ Suppose that $r(X)\equiv f^*(X) \mod \left( w^*(X)f^*(X) g^*(X) \right)$. Then there exists $a(X)\in \F_q[X]$ such that $r(X)- f^*(X)= a(X)w^*(X)f^*(X) g^*(X)$. It follows that $r(X)= f^*(X)\left( 1+ a(X)w^*(X) g^*(X)\right)$, hence $f^*(X)\mid r(X)$.
\end{proof}

We continue with a closed description of the trace dual $\C^{\perp_{TE}}$ of a general $\F_q$-linear cyclic $\F_{q^2}$-code $\C$ of length $n$ with $\gcd(n,q)=1$. There are no further restrictions on $\C$.
\begin{theorem} \label{main cyclic}
    Let $w(X), \ell(X), f(X),g(X)\in \F_q[X]$ such that 
    \[X^n-1= w(X) \ell(X) f(X)g(X),\] 
   where $\gcd(n,q)=1$.  
   Consider the general $\F_q$-linear cyclic $\F_{q^2}$-code \[\C= \langle 
 w(X) f(X)+ \gamma w(X)q(X), \gamma w(X)g(X)\rangle .\]  
 Then 
\begin{align}
     \C^{\perp_{TE}} = & \Big\{ c(X)\ell^*(X)+\gamma d(X)\ell^*(X)f^*(X)\in \F_{q^2}[X] \mid c(X), d(X) \in \F_{q}[X] \mbox{ with } \label{eq:2.12b}\\ \nonumber
     & \qquad X^{n-\deg f(X)} c(X)+ \gamma^2 \widehat{q}(X)d(X)\in \langle g^*(X) \rangle \Big\}\\
     = & \bigg\langle  g^*(X) c'(X) \ell^*(X)+ \gamma g^*(X)d' (X)\ell^*(X)f^*(X),  \label{eq:2.12c}\\ \nonumber
     & \qquad\qquad\qquad  \frac{\gamma^2 \widehat{q}(X)}{h(X)}\ell^*(X)  - \gamma  \frac{X^{n-\deg f(X)}}{h(X)} \ell^*(X) f^*(X) \bigg\rangle ,
 \end{align}
 where $h(X)= \gcd \left( X^{n-\deg f(X)}, \widehat{q}(X) \right)$ and $c'(X), d'(X)\in \F_q[X]$ are such that 
\[  X^{n-\deg f(X)}c'(X) + \gamma^2 \widehat{q}(X) d'(X) = h(X).\]
 \end{theorem}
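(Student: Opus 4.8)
The plan is to determine $\C^{\perp_{TE}}$ by first reducing the trace-orthogonality condition to a pair of Euclidean-orthogonality conditions, then characterizing the resulting module explicitly, and finally producing two generators. Since a general element of $\C^{\perp_{TE}}$ can be written $c(X)+\gamma d(X)$ with $c,d\in\F_q[X]$, I would impose that it be trace-Euclidean orthogonal to both generators of $\C$, using $a(X)=1,b(X)=0$ and $a(X)=0,b(X)=1$ as in the proof of Theorem~\ref{verma cor}. The key computational tool is Proposition~\ref{prop 2.6}: the identities $f\circledast\alpha g=\tr(\alpha)(f\star g)$, together with $\tr(\gamma)=0$ and $\tr(\gamma^2)=2\gamma^2\neq0$ from Lemma~\ref{gammasquare}, will collapse each trace condition into a plain Euclidean condition. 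Orthogonality to $\gamma w(X)g(X)$ should, after applying \eqref{lem:passtheq} to move the polynomial factor across $\star$, force $d(X)$ to lie in an ideal generated by a reciprocal of a quotient of $X^n-1$; orthogonality to $w(X)f(X)+\gamma w(X)q(X)$ should couple $c(X)$ and $d(X)$ through $\widehat q(X)$.

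Concretely, I expect the two conditions to read, via Lemma~\ref{euclideandual} and \eqref{lem:passtheq}, as divisibility/membership statements. Orthogonality in the $\gamma w g$ direction gives a condition on the combination $c\cdot(\text{something})+\gamma^2\widehat q\,d$, while orthogonality in the $wf+\gamma wq$ direction gives the $\langle g^*\rangle$-membership of $X^{n-\deg f}c+\gamma^2\widehat q\,d$ recorded in \eqref{eq:2.12b}. I would be careful to track the factor $\ell^*(X)f^*(X)$: because $X^n-1=w\ell f g$ and $\gcd(n,q)=1$ makes $X^n-1$ squarefree (so $w,\ell,f,g$ are pairwise coprime and each divides $X^n-1$), the reciprocal of $(X^n-1)/(wf)=\ell g$ is $\ell^*g^*$, matching the shape of the answer. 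The first membership, after dividing out common factors, pins down $d(X)$ up to a multiple of $f^*(X)$ (explaining the $\ell^*f^*$ in the generators), and the coupled $\langle g^*\rangle$-condition then determines the allowable $c(X)$.

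To pass from the set description \eqref{eq:2.12b} to the two generators in \eqref{eq:2.12c}, the plan is standard module bookkeeping. The condition $X^{n-\deg f}c+\gamma^2\widehat q\,d\in\langle g^*\rangle$ is a single linear relation over $\F_q[X]/\langle g^*\rangle$ in the unknowns $c,d$. Its solution set is generated by: one particular ``homogeneous'' solution making $X^{n-\deg f}c+\gamma^2\widehat q\,d$ a unit multiple of $g^*$ (yielding the first generator with factor $g^*$), plus the kernel of the map $(c,d)\mapsto X^{n-\deg f}c+\gamma^2\widehat q\,d$, which is controlled by $h(X)=\gcd(X^{n-\deg f},\widehat q)$. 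Writing $X^{n-\deg f}c'+\gamma^2\widehat q\,d'=h$ via B\'ezout gives the syzygy $(c,d)=\bigl(\gamma^2\widehat q/h,\ -X^{n-\deg f}/h\bigr)$ producing the second generator. I would then verify both generators satisfy \eqref{eq:2.12b} and that any solution is an $\F_q[X]$-combination of them.

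The main obstacle I anticipate is the careful handling of the reciprocal operator under the module structure: the identities in Lemmas~\ref{lem:singlestar} and \ref{lem:doublestar} only hold up to explicit powers of $X$, which are units in $R_n$ but must be tracked so that the generating sets are not inadvertently enlarged or shrunk. In particular, reconciling the $\widehat q$ that appears naturally from \eqref{lem:passtheq} (where $\widehat q(X)=q(X^{n-1})$ and equals $X^{n-\deg q}q^*(X)$ in $R_n$) with the $q^*$-style factors, and confirming that the coupled condition genuinely reduces to the stated $\gcd$-based generator rather than a larger submodule, will require the most delicate argument. I would also need to confirm that the particular solution and the syzygy together span the full solution module over $P_n$, for which the coprimality of $\ell,f,g$ coming from $\gcd(n,q)=1$ is essential.
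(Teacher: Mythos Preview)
Your overall strategy matches the paper's proof, but two points need correction before the argument goes through. First, imposing orthogonality only against $a(X)=1,b(X)=0$ and $a(X)=0,b(X)=1$ is not sufficient and is not what the proof of Theorem~\ref{verma cor} actually does: you must require $u(X)\circledast v(X)=0$ for \emph{all} $a(X),b(X)\in\F_q[X]$, and then set $a=0$ (arbitrary $b$) and $b=0$ (arbitrary $a$) separately; only the freedom in $a$ and $b$ lets you invoke Lemma~\ref{euclideandual} to convert a vanishing inner product into an ideal membership. Second, your second paragraph mislabels which generator yields which condition: orthogonality to all $b(X)\gamma w(X)g(X)$ cannot introduce $\widehat q$ (since $q$ does not appear in that generator) and simply forces $d\in\langle\ell^*f^*\rangle$; it is orthogonality to all $a(X)\bigl(w(X)f(X)+\gamma w(X)q(X)\bigr)$ that, after writing $d=d_1\ell^*f^*$ and applying \eqref{lem:passtheq}, first gives $\ell^*\mid c$ (so $c=c_1\ell^*$) and then the coupled membership $X^{n-\deg f}c_1+\gamma^2\widehat q\,d_1\in\langle g^*\rangle$. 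Your passage from \eqref{eq:2.12b} to \eqref{eq:2.12c} via B\'ezout and the syzygy is exactly the paper's argument, modulo one terminological slip: the pair $(g^*c',g^*d')$ is the particular \emph{inhomogeneous} solution (using $\gcd(g^*,h)=1$, valid because $h\mid X^{n-\deg f}$ and $g^*$ has nonzero constant term), while $(\gamma^2\widehat q/h,\,-X^{n-\deg f}/h)$ generates the homogeneous kernel and is not itself produced by B\'ezout.
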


\begin{proof}
Let $ u(X)=a(X)\left(w(X) f(X)+\gamma w(X) q(X)\right)+b(X) \gamma w(X) g(X) \in \C $ and $v(X)=c(X)+\gamma d(X)\in \C^{\perp_{TE}}$  with $a(X),b(X),c(X),d(X)\in \F_q[X]$. Then
\begin{align*}
    0=&u(X)   \circledast v(X)\\
    =& a(X) w(X) f (X)  \circledast c(X)+a(X) w (X) f (X)  \circledast \gamma d(X)\\
    & +a(X) \gamma w(X) q (X)  \circledast c(X)+a(X) \gamma w(X) q (X)  \circledast \gamma d(X) \\
    & +b(X) \gamma w(X) g (X)  \circledast c(X)+b(X) \gamma w(X) g (X)  \circledast \gamma d(X) \\
    =& 2 a(X) w(X) f (X)  \star c(X)+\tr{(\gamma)}[a(X) w (X) f (X)  \star  d(X)]\\
    & +\tr{(\gamma)}[a(X)  w(X) q (X)  \star c(X)]+\tr{(\gamma^2)}[a(X) w(X) q (X)  \star  d(X)] \\
    & +\tr{(\gamma)}[b(X) w(X) g (X)  \star c(X)]+\tr{(\gamma^2)}[b(X)  w(X) g (X)  \star d(X)] \\
    = & 2 a(X) w(X) f (X) \star c(X) + 2 \gamma^2 a(X) w(X) q (X) \star d(X) \\
& + 2 \gamma^2 b(X) w(X) g (X) \star d(X).
\end{align*}

In particular, if $a(X)=0$ it follows that $2\gamma^2 b(X)w(X)g(X)\star d(X)=0.$ Thus, $d(X)\in \langle \ell^*(X) f^*(X)\rangle$ and we can write $d(X)= d_1(X)\ell^*(X)f^*(X)$ for some $d_1(X)\in \F_q[X].$ 

Then $2 \gamma^2 b(X) w(X) g (X) \star d_1(X)\ell^*(X)f^*(X)=0$ for all $b(X)\in \F_q[X]$ and
\begin{align*}
    0=& u(X)   \circledast v(X) \\
    = & 2 a(X) w(X) f (X) \star c(X) + 2 \gamma^2 a(X) w(X) q (X) \star d_1(X)\ell^*(X)f^*(X) \\
= &  2 a(X) w(X)  \star \left(  \widehat{f}(X)c(X)+ \gamma^2 \widehat{q}(X)d_1(X)\ell^*(X)f^*(X)  \right), \mbox{ by \eqref{lem:passtheq}}.
\end{align*}

Since all polynomials involved belong to $\F_q[X]$ and $\gamma^2 \in \F_q$, we have the necessary and sufficient condition  
\begin{align} \label{eq:2.12a}
\left(  \widehat{f}(X)c(X)+ \gamma^2 \widehat{q}(X)d_1(X)\ell^*(X)f^*(X)  \right) \in \langle \ell^*(X) f^*(X) g^*(X) \rangle.
\end{align}

Notice that
\begin{align*}
    \ell^*(X)  & f^*(X) g^*(X)\mid \left( \widehat{f}(X) c(X) +\gamma^2 \widehat{q}(X) d_1(X) \ell^*(X) f^*(X) \right) \\
    & \iff \ell^*(X) f^*(X) g^*(X) \mid \left( X^{n-\deg f(X)} f^*(X) c(X) +\gamma^2 \widehat{q}(X) d_1(X) \ell^*(X) f^*(X) \right) \\
    & \iff \ell^*(X) g^*(X) \mid \left( X^{n-\deg f(X)}  c(X) +\gamma^2 \widehat{q}(X) d_1(X) \ell^*(X)  \right) \\
    & \implies \ell^*(X)  \mid \left( X^{n-\deg f(X)}  c(X) +\gamma^2 \widehat{q}(X) d_1(X) \ell^*(X)  \right) \\
    & \implies \ell^*(X)  \mid \left( X^{n-\deg f(X)}  c(X) \right) \\
    & \implies \ell^*(X)  \mid   c(X) .
\end{align*}

Hence, any $c(X)+\gamma d(X)=c(X)+\gamma d_1(X) \ell^*(X) f^*(X) \in \C^{\perp_{TE}}$ has $c(X)$ of the form $c(X)= c_1(X) \ell^*(X)$ with $c_1(X)\in \F_q[X]$.

Substituting this back into \eqref{eq:2.12a} and the following statements, we have
\begin{align*}
    \ell^*(X) f^*(X) g^*(X) &\mid \left( \widehat{f}(X) c_1(X) \ell^*(X)  +\gamma^2 \widehat{q}(X) d_1(X) \ell^*(X) f^*(X) \right) \\
    & \iff \ell^*(X) g^*(X) \mid \left( X^{n-\deg f(X)}  c_1(X) \ell^*(X) +\gamma^2 \widehat{q}(X) d_1(X) \ell^*(X)  \right) \\
    & \iff  g^*(X) \mid \left( X^{n-\deg f(X)}  c_1(X) +\gamma^2 \widehat{q}(X) d_1(X)   \right) .
\end{align*}

\noindent Therefore,
\begin{align*}
     \C^{\perp_{TE}} = & \Big\{ c_1(X)\ell^*(X)+\gamma d_1(X)\ell^*(X)f^*(X)\in \F_{q^2}[X] \mid c_1(X), d_1(X) \in \F_{q}[X] \mbox{ such that }\\
     & \qquad X^{n-\deg f(X)} c_1(X)+ \gamma^2 \widehat{q}(X)d_1(X)\in \langle g^*(X) \rangle \Big\}.
 \end{align*}

Let $h(X):= \gcd \left( X^{n-\deg f(X)},  \widehat{q}(X) \right)$. Perform the extended Euclidean algorithm to find $c'(X), d'(X)\in \F_q[X]$ such that
\[  X^{n-\deg f(X)}c'(X) + \gamma^2 \widehat{q}(X) d'(X) = h(X).\]

Since $g^*(X)$ has nonzero constant term, we have that $\gcd\left(g^*(X), h(X)  \right)=1$. Hence,
\begin{align*}
    g^*(X) \mid &\left( X^{n-\deg f(X)}  c_1(X) +\gamma^2 \widehat{q}(X) d_1 (X)  \right) \\
    &\iff g^*(X) h(X) \mid \left( X^{n-\deg f(X)}  c_1(X) +\gamma^2 \widehat{q}(X) d_1(X)   \right) . 
\end{align*}

Notice that 
\[  X^{n-\deg f(X)}\left( g^*(X) c'(X)\right) + \gamma^2 \widehat{q}(X) \left( g^*(X) d'(X) \right) = g^*(X)h(X).\]

Therefore $\left(g^*(X) c'(X) \ell^*(X)+ \gamma g^*(X) d'(X) \ell^*(X)f^*(X)\right)\in \C^{\perp_{TE}}$ as a generator of the (inhomogeneous) solutions to $(0\ne) X^{n-\deg f(X)} c_1(X)+ \gamma^2 \widehat{q}(X)d_1(X)\in \langle g^*(X) \rangle$.

The solutions of the homogeneous equation
\[  X^{n-\deg f(X)} c_1(X) + \gamma^2 \widehat{q}(X) d_1(X)  = 0\]
%
are of the form
\[  c_1(X)=\frac{\gamma^2 \widehat{q}(X)}{h(X)} e(X), d_1(X)=-\frac{X^{n-\deg f(X)}}{h(X)}e(X) \mbox{ with }e(X)\in \F_q[X]. \]

We conclude
\begin{align*}
     \C^{\perp_{TE}} = & \bigg\langle  g^*(X) c'(X) \ell^*(X)+ \gamma g^*(X)d' (X)\ell^*(X)f^*(X), \\
     & \qquad\qquad\qquad  \frac{\gamma^2 \widehat{q}(X)}{h(X)}\ell^*(X)  - \gamma  \frac{X^{n-\deg f(X)}}{h(X)} \ell^*(X) f^*(X)  \bigg\rangle . \qedhere
 \end{align*}
%
\end{proof}

\begin{example}
Let $\C$ be the code $\langle w(X) f(X), \gamma w(X)g(X)\rangle$, that is, choose $q(X)=0$. Then $\widehat{q}(X)=0$, $h(X)=\gcd(X^{n-\deg f(X)}, 0)=X^{n-\deg f(X)}$, and 
$X^{n-\deg f(X)} c'(X) + \gamma^2 \widehat{q}(X)d'(X)=h(X)$ 
for the constant polynomials $c'(X)=1$ and $d'(X)=0$. Then
\[ g^*(X) c'(X) \ell^*(X)+ \gamma g^*(X)d' (X)\ell^*(X)f^*(X)=  g^*(X)\ell^*(X)\in \C^{\perp_{TE}}, \]
\[ \frac{\gamma^2 \widehat{q}(X)}{h(X)}\ell^*(X)  - \gamma  \frac{X^{n-\deg f(X)}}{h(X)} \ell^*(X) f^*(X) = -\gamma  \ell^*(X) f^*(X)\in \C^{\perp_{TE}},  \]
\[\mbox{ and } \qquad \C^{\perp_{TE}} =\left\langle g^*(X)\ell^*(X),-\gamma  \ell^*(X) f^*(X) \right\rangle, \]
which matches Theorem \ref{verma cor}.
\end{example}

\begin{example}
    Let $\C = \langle w(X) f(X)+ \gamma w(X)q(X), \gamma w(X)g(X)\rangle$ with $X^n-1= w(X)\ell(X)f(X) g(X)$, $q(X)=1$ constant, and $g(X)=X-1$. Then $g^*(X)=1-X$ and, recalling the implicit description of $\C^{\perp_{TE}}$ in \eqref{eq:2.12b}, for $c(X), d(X) \in \F_q[X]$ we have that
\begin{align} \label{eq:2.14}
    g^*(X) \mid \left( X^{n-\deg f(X)}  c(X) +\gamma^2 \widehat{q}(X) d(X)   \right) & \iff 
    (1-X) \mid \left( X^{n-\deg f(X)}  c(X) +\gamma^2 d(X)   \right) \nonumber\\
    & \iff  1^{n-\deg f(X)}  c(1) +\gamma^2 d(1)   =0 \nonumber\\
    & \iff    c(1) +\gamma^2 d(1)   =0.
\end{align}    

On the other hand, following the explicit construction of $\C^{\perp_{TE}}$ in \eqref{eq:2.12c}, gives \[h(X)=\operatorname{gcd}(X^{n-\deg f(X)}, \widehat{q}(X)) = \operatorname{gcd}(X^{n-\deg f(X)}, 1 ) =1,\] and we can write 
$X^{n-\deg f(X)}c'(X) +\gamma^2 \widehat{q}(X) d'(X)=h(X)$
with constant polynomials $c'(X)=0$ and $d'(X)= \gamma^{-2}$. Then 
\begin{align*}
    g^*(X)c'(X)\ell^*(X)+ \gamma g^*(X) d'(X)\ell^*(X) f^*(X)&= \gamma g^*(X) \gamma^{-2}\ell^*(X) f^*(X)\\
    &= \gamma^{-1} (1-X) \ell^*(X) f^*(X) \in \C^{\perp_{TE}},\\
%
%
%
%
    \frac{\gamma^2 \widehat{q}(X)}{h(X)}\ell^*(X)  - \gamma  \frac{X^{n-\deg f(X)}}{h(X)} \ell^*(X) f^*(X) &=
    \gamma^2 \ell^*(X)- \gamma X^{n-\deg f(X)} \ell^*(X)f^*(X)\\ &= \ell^*(X)\left( \gamma^2 - \gamma X^{n-\deg f(X)}f^*(X)  \right)\in \C^{\perp_{TE}},
\end{align*}
\[\mbox{ and } \qquad  \C^{\perp_{TE}}= \left\langle  \gamma^{-1} (1-X) \ell^*(X) f^*(X), \ell^*(X)\left( \gamma^2 - \gamma X^{n-\deg f(X)}f^*(X)  \right)  \right\rangle . \]

%

Notice that $\gamma^{-1} (1-X) \ell^*(X) f^*(X)=c(X)\ell^*(X)+ \gamma d(X)\ell^*(X)f^*(X)$ with $c(X)=0$, $d(X)=\gamma^{-2}(X-1)$. In particular, $c(1)+ \gamma^2 d(1)= 0 + \gamma^2 \gamma^{-2}(1-1)=0$,
which verifies Equation \eqref{eq:2.14}.

Similarly, $\ell^*(X)( \gamma^2 - \gamma X^{n-\deg f(X)}f^*(X)  )=c(X)\ell^*(X)+ \gamma d(X)\ell^*(X)f^*(X)$ with $c(X)=\gamma^2$, $d(X)=-X^{n-\deg f(X)}$. In particular, $c(1)+ \gamma^2 d(1)= \gamma^2 -1^{n-\deg f(X)} \gamma^2 =0$,
again verifying Equation \eqref{eq:2.14}.
%
%

\end{example}

\begin{example}
    Let $\C$ be the $\F_3$-linear cyclic $\F_9$-code $\langle w(X) f(X)+ \gamma w(x)q(X), \gamma w(X)g(X)\rangle$ with $n=28$, $w(X)=X+2$, $f(X)=X^2+1$, $g(X)=X^6+2X^5+2X^3+2X+1$, and $q(X)=X$ as in \cite[Table 1]{Verma-Sharma:2024}. Then
    \begin{align*}
       \ell(X)&=  X^{19}+2X^{18}+2X^{17}+X^{14}+2X^{13}+2X^{12}+2X^{11}+2X^{10}\\
        & \qquad\qquad  +2X^{9}+2X^{8}+2X^{7}+2X^{6}+X^{5}+2X^{2}+2X+1,\\
        w^*(X)=2X+1, &\quad \ell^*(X)=\ell(X), \quad f^*(x)=f(X), \quad g^*(X)=g(X), \quad \widehat{q}(X)=X^{27}.
    \end{align*}
    Then $h(X)=\gcd( X^{n-\deg f(X)},  \widehat{q}(X))=\gcd( X^{26},  X^{27})=X^{26}$, and $c'(X)=1$, $d'(X)=0$ solve the equation
    \[ X^{n-\deg f(X)}c'(X) +\gamma^2 \widehat{q}(X)d'(X)= h(X). \]
    Hence,
    \begin{align*}
    g^*(X) &c'(X) \ell^*(X)+ \gamma g^*(X)d' (X)\ell^*(X)f^*(X)=g(X)\ell(X)\\
    &= X^{25}+X^{24}+X^{21}+X^{20}+X^{17}+X^{16}+X^{13}\\
    &+ X^{12}+X^9+X^8+X^5+X^4+X+1
    \end{align*}
    gives one generator of $\C^{\perp_{TE}}$. Note that $\gamma^2\in \F_3$ leaves the only option $\gamma^2=-1$, and another generator of $\C^{\perp_{TE}}$ thus is given by
    \begin{align*}
        & \frac{\gamma^2 \widehat{q}(X)}{h(X)}\ell^*(X)  - \gamma  \frac{X^{n-\deg f(X)}}{h(X)} \ell^*(X) f^*(X) = \dfrac{\gamma^2 X^{27}}{X^{26}}\ell(X)- \gamma\dfrac{ X^{26}}{X^{26}}\ell(X)f(X)\\
        = & -X \ell(X) - \gamma \ell(X)f(X)\\ 
        = & -X^{20}+X^{19}+X^{18}-X^{15}+X^{14}+X^{13}+X^{12}+ X^{11}+X^{10}\\
          & +X^{9}+X^{8}+X^{7}-X^{6}+X^{3}+X^2-X\\
          & +\gamma (-X^{21}+X^{20}+X^{18}+X^{17}-X^{16}+X^{15}-X^{13}-X^{12}-X^{11}\\
          & -X^{10}-X^9-X^8+X^6-X^5+X^4+X^3+X-1).
    \end{align*}
    
    Thus, we have the trace Euclidean dual
    \begin{align*}
    \C^{\perp_{TE}} = &\left\langle g(X)\ell(X) , - X \ell(X) - \gamma \ell(X)f(X) \right\rangle \\
    = \big\langle X^{25}&+X^{24}+X^{21}+X^{20}+X^{17}+X^{16}+X^{13}\\
    &+X^{12}+X^9+X^8+X^5+X^4+X+1,\\
    &-X^{20}+X^{19}+X^{18}-X^{15}+X^{14}+X^{13}+X^{12} +X^{11}\\
    &+X^{10}+X^{9}+X^{8}+X^{7}-X^{6}+X^{3}+X^2-X\\
    &+\gamma (-X^{21}+X^{20}+X^{18}+X^{17}-X^{16}+X^{15}-X^{13}- X^{12}\\
    &-X^{11}-X^{10}-X^9-X^8+X^6-X^5+X^4+X^3+X-1)
    \big\rangle.
    \end{align*}
\end{example}

Actually, the restriction $\gcd(n,q)=1$ can easily be dropped from Theorem \ref{main cyclic}. In particular, we are including the following two results that can be derived by calculations almost identical to the ones performed for Theorem \ref{main skew cyclic}.

\begin{theorem} 
    Let $w(X), \ell(X), f(X),g(X)\in \F_q[X]$ such that 
    \[X^n-1= w(X) \ell(X) f(X)g(X).\]  
   Consider the general $\F_q$-linear cyclic $\F_{q^2}$-code \[\C= \langle 
 w(X) f(X)+ \gamma q(X), \gamma w(X)g(X)\rangle .\]  
 Then 
\begin{align*}
     \C^{\perp_{TE}} = \bigg\langle c'(X) \dfrac{ w^*(X)\ell^*(X)g^*(X) }{k'(X)} &+ \gamma d'(X) \dfrac{w^*(X)\ell^*(X)g^*(X)}{k'(X)}\ell^*(X)f^*(X), \\ \nonumber
      \dfrac{\gamma^2  \ell^*(X)  \widehat{q}(X)}{h'(X)} &- \gamma \dfrac{X^{n-\deg f(X)} \widehat{w}(X)}{h'(X)} \ell^*(X) f^*(X) \bigg\rangle, 
 \end{align*}
where 
\begin{align*}
&h'(X) = \gcd( X^{n-\deg f(X)}\widehat{w}(X)), \ell^*(X) \widehat{q}(X)),\\ &k'(X) = \gcd( h'(X), w^*(X)\ell^*(X)g^*(X) ).
\end{align*}
and 
\[c'(X), d'(X) \in \F_q[X]\] 
are such that
\[c'(X) X^{n-\deg f(X)}\widehat{w}(X) +\gamma^2 d'(X) \ell^*(X) \widehat{q}(X) = h'(X).  \]
 \end{theorem}

\begin{proof}
Follow the proof of Theorem \ref{main skew cyclic} and replace any instance of $-X$ in this proof with $X$. 
Note that $h(X) := \gcd (\widehat{w}(X)\widehat{f}(X), \ell^*(X)f^*(X)\widehat{q}(X))= f^*(X)h'(X)$ with $h'(X):= \gcd (X^{n-\deg f(X)}\widehat{w}(X), \ell^*(X)\widehat{q}(X))$.
\end{proof}

We also include the matching result on the trace Hermitian dual.

 \begin{theorem} 
    Let $w(X), \ell(X), f(X),g(X)\in \F_q[X]$ such that 
    \[X^n-1= w(X) \ell(X) f(X)g(X).\]  
   Consider the general $\F_q$-linear cyclic $\F_{q^2}$-code \[\C= \langle 
 w(X) f(X)+ \gamma q(X), \gamma w(X)g(X)\rangle .\]  
 Then 
\begin{align*}
     \C^{\perp_{TH}} = \bigg\langle c'(X) \dfrac{ w^*(X)\ell^*(X)g^*(X) }{k'(X)} &+ \gamma d'(X) \dfrac{w^*(X)\ell^*(X)g^*(X)}{k'(X)}\ell^*(X)f^*(X), \\ \dfrac{\gamma^{q+1}  \ell^*(X)  \widehat{q}(X)}{h'(X)} &- \gamma \dfrac{X^{n-\deg f(X)}\widehat{w}(X)}{h'(X)} \ell^*(X) f^*(X) \bigg\rangle, 
 \end{align*}
where 
\begin{align*} 
h'(X) & = \gcd(X^{n-\deg f(X)} \widehat{w}(X), \ell^*(X) \widehat{q}(X)),\\
k'(X) & = \gcd( h'(X), w^*(X)\ell^*(X)g^*(X) ),
\end{align*}
and \[c'(X), d'(X) \in \F_q[X]\] are such that

\[c'(X) X^{n-\deg f(X)} \widehat{w}(X)  +\gamma^{q+1} d'(X) \ell^*(X) \widehat{q}(X) = h'(X).\]
Recall here that $\gamma^{q+1}=-\gamma^2$, see Lemma \ref{gammasquare}.
 \end{theorem}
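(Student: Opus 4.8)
The plan is to run the same argument that establishes the trace Euclidean analogue and to isolate the single point where the Hermitian pairing differs. I would take a general codeword $u(X)=a(X)\bigl(w(X)f(X)+\gamma q(X)\bigr)+b(X)\,\gamma w(X)g(X)$ and a general candidate dual element $v(X)=c(X)+\gamma d(X)$ with $a,b,c,d\in\F_q[X]$, and expand $u(X)\boxdot v(X)$ using the $\F_q$-bilinearity of $\boxdot$ and the Hermitian identities of Proposition~\ref{prop 2.6}. Every summand carrying exactly one factor of $\gamma$ picks up the coefficient $\tr(\gamma)=0$ and vanishes, just as in the Euclidean computation; the term $a(X)w(X)f(X)\boxdot c(X)$ contributes $2\bigl(a(X)w(X)f(X)\star c(X)\bigr)$; and each term carrying two factors of $\gamma$ contributes $\tr(\gamma^{q+1})$ times the corresponding Euclidean product, in place of the $\tr(\gamma^2)$ of the trace Euclidean setting. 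This gives
\[
u(X)\boxdot v(X)=2\Bigl(a(X)w(X)f(X)\star c(X)+\gamma^{q+1}\bigl(a(X)q(X)\star d(X)+b(X)w(X)g(X)\star d(X)\bigr)\Bigr),
\]
which is exactly the trace Euclidean expression with $\gamma^2$ replaced by $\gamma^{q+1}$.

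The one substantive observation is that this substitution is harmless. By Lemma~\ref{gammasquare} we have $\gamma^{q+1}=-\gamma^2\in\F_q$, so the whole pairing still reduces to products of polynomials in $\F_q[X]$, and $\tr(\gamma^{q+1})=2\gamma^{q+1}=-2\gamma^2\neq0$ because $q$ is odd and $\gamma\neq0$. This nonvanishing drives the reduction: setting $a(X)=0$ forces $b(X)w(X)g(X)\star d(X)=0$ for all $b(X)$, so $d(X)\in\langle\ell^*(X)f^*(X)\rangle$ by Lemma~\ref{euclideandual}, and we write $d(X)=d_1(X)\ell^*(X)f^*(X)$. Substituting back, the $b(X)$ term drops, and property~\eqref{lem:passtheq} together with $\widehat{f}\equiv X^{\,n-\deg f}f^*$ in $R_n$ collapses $u(X)\boxdot v(X)=0$ (for all $a,b$) into a single divisibility condition on $c(X)$ and $d_1(X)$ modulo $X^n-1$.

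From there I would run the extended Euclidean algorithm on $X^{\,n-\deg f}\widehat{w}(X)$ and $\ell^*(X)\widehat{q}(X)$, whose gcd is $h'(X)$, to obtain $c'(X),d'(X)$ solving $c'X^{\,n-\deg f}\widehat{w}+\gamma^{q+1}d'\ell^*\widehat{q}=h'$; the inhomogeneous generator is then produced by clearing against $w^*\ell^*g^*$ (whence $k'=\gcd(h',w^*\ell^*g^*)$), while the homogeneous generator is the syzygy with $c$-part $\gamma^{q+1}\ell^*\widehat{q}/h'$ and $d_1$-part $-X^{\,n-\deg f}\widehat{w}/h'$, mirroring the trace Euclidean construction under $\gamma^2\rightsquigarrow\gamma^{q+1}$. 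The reciprocal-polynomial bookkeeping required once $\gcd(n,q)=1$ is dropped, so that factors of $X^n-1$ may repeat, is supplied by Lemmas~\ref{lem:singlestar} and~\ref{lem:doublestar} exactly as in the Euclidean case. I expect the main obstacle to be organizational rather than mathematical: checking that $\gamma^{q+1}$ is inserted in every slot that previously held $\gamma^2$ (the B\'ezout equation, both generators, and the homogeneous syzygy) and in no other, so the substitution is globally consistent, and recording $\tr(\gamma^{q+1})\neq0$ so that the $a=0$ reduction still determines $d(X)$. Beyond invoking Lemma~\ref{gammasquare}, no new structural idea is needed, since the two trace pairings differ only through $\tr(\gamma^2)$ versus $\tr(\gamma^{q+1})$.
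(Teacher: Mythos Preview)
Your proposal is correct and matches the paper's approach: the paper gives no standalone proof for this statement, treating it as the trace Euclidean result (itself obtained from the skew cyclic proof by replacing $-X$ with $X$) with $\gamma^2$ replaced by $\gamma^{q+1}$, and invoking Lemma~\ref{gammasquare} for $\gamma^{q+1}=-\gamma^2\in\F_q$ and $\tr(\gamma^{q+1})\neq0$. That is precisely the substitution-and-nonvanishing argument you outlined.
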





\section{\texorpdfstring{The Trace Dual for $\F_{q}$-Linear Skew Cyclic $\F_{q^{2}}$-Codes}{The Trace Dual for Fq-Linear Skew Cyclic Fq2-Codes}}\label{sec:dualskewcyclicodes}

Consider a field extension $\F_q\leq \F_{q^2}$ of degree $2$, where $q$ is an odd prime power. Let $\operatorname{Tr}(x)= x+ x^q$ for $x\in \F_{q^2}$ denote the trace and $\sigma: \F_{q^2}\rightarrow \F_{q^2}$ the Frobenius automorphism given by $\sigma(x)=x^q$.
Further, let $\F_{q^2}[X; \sigma]$ and $\F_{q}[X; \sigma]=\F_{q}[X]$ be the rings of skew polynomials with coefficients in $\F_{q^2}$ and $\F_q$, respectively.

There exists $\gamma \in \F_{q^2}\setminus \F_q$ such that $\tr(\gamma)=0$, that is $\sigma(\gamma) =\gamma^q = -\gamma$. Then, by Lemma \ref{gammasquare}, $\gamma^2\in \F_q$ and $\tr(\gamma^2)=2\gamma^2 \ne 0$.
%
%
Moreover, $\ker{(\tr)}= \gamma \F_q$, and $\{ 1,\gamma \}$ is an $\F_q$-basis of $\F_{q^2}$. We can write:
\[\F_{q^2} = \F_q + \gamma \F_q \] and \[\F_{q^2}[X;\sigma] = \F_q [X;\sigma]+ \gamma \F_q[X;\sigma] = \F_q [X]+ \gamma \F_q[X].\]

We also note the following useful multiplication fact on the skew polynomial ring $\F_{q^2}[X;\sigma]$.

\begin{proposition}
For all $f(X) \in \F_{q}[X] \subseteq \F_{q^2}[X;\sigma]$ holds $f(X)\gamma =\gamma f(-X)$.
\end{proposition}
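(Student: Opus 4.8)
The statement to prove is the commutation rule $f(X)\gamma = \gamma f(-X)$ for all $f(X) \in \F_q[X] \subseteq \F_{q^2}[X;\sigma]$.

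The plan is to reduce the claim to the defining commutation rule of the skew polynomial ring and then extend it additively over the coefficients of $f$. First I would recall that in $\F_{q^2}[X;\sigma]$ the multiplication is determined by the rule $X a = \sigma(a) X = a^q X$ for any scalar $a \in \F_{q^2}$, together with $\F_{q^2}$-linearity and associativity. The key special case is therefore $X\gamma$: since $\sigma(\gamma) = \gamma^q = -\gamma$ by the choice of $\gamma$, we get $X\gamma = -\gamma X = \gamma(-X)$. This is the single computation on which everything rests.

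Next I would establish the monomial case by induction on the exponent. For a scalar $c \in \F_q$ and exponent $i$, I claim $cX^i \gamma = \gamma c(-X)^i$. Since $c \in \F_q$ is fixed by $\sigma$, it commutes through without sign change, so the content is moving $\gamma$ past $X^i$. Using $X\gamma = \gamma(-X)$ repeatedly, each of the $i$ copies of $X$ contributes a factor of $-1$ and a sign flip on the variable, giving $X^i \gamma = \gamma(-X)^i$; formally this is a clean induction with base case $i=0$ trivial and the inductive step $X^{i+1}\gamma = X(X^i \gamma) = X\gamma(-X)^i = \gamma(-X)(-X)^i = \gamma(-X)^{i+1}$, using that $\gamma(-X)^i$ already has its scalar $\gamma$ on the left so $X$ passes the remaining powers of $-X$ without further interaction. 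I should be slightly careful to phrase the induction so that $X$ always acts on a term whose leading factor is $X$ to a power times $\F_q$-coefficients, where no sign issues from $\sigma$ arise because those coefficients lie in the fixed field.

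Finally I would sum over the terms of $f$. Writing $f(X) = \sum_{i=0}^m f_i X^i$ with $f_i \in \F_q$, additivity of multiplication in the ring gives
\[
f(X)\gamma = \sum_{i=0}^m f_i X^i \gamma = \sum_{i=0}^m f_i \gamma(-X)^i = \gamma \sum_{i=0}^m f_i(-X)^i = \gamma f(-X),
\]
where the second equality is the monomial case and the third uses that $\gamma$ commutes with the coefficients $f_i \in \F_q$ (indeed $f_i$ is central enough here, as $f_i\gamma = \gamma f_i$ follows from $\F_q$ being the fixed field, or simply because the relevant products are being rearranged within the $\F_{q^2}$-module structure). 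I do not expect any genuine obstacle; the only point demanding care is keeping the bookkeeping of signs correct in the inductive step and making explicit that coefficients from the fixed field $\F_q$ pass through $\sigma$ untouched, so that the sign flips arise solely from moving $\gamma$ past powers of $X$.
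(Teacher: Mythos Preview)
Your argument is correct and matches the paper's approach exactly: the paper's proof is the single line ``This is obvious from $X\gamma=\sigma(\gamma)X=\gamma^qX=-\gamma X$,'' and you have simply spelled out the induction and linearity that the word ``obvious'' is standing in for.
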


\begin{proof}
This is obvious from $X\gamma =\sigma(\gamma)X=\gamma^qX=-\gamma X$ in $\F_{q^2}[X;\sigma]$.
\end{proof}

\begin{definition}
    An \emph{$\F_q$-linear skew cyclic $\F_{q^2}$-code of length $n$} is an $\F_q$-linear subspace $\C$ of $\F_{q^2}^n$ that satisfies
\[(c_0, c_1, \ldots, c_{n-1}) \in \C \implies (\sigma(c_{n-1}),\sigma(c_0),\ldots,\sigma(c_{n-2}))\in \C.\]
This definition requires $\sigma^n=\operatorname{id}$. Thus $n$ must be even.
\end{definition}

In order to study dual codes of $\F_q$-linear skew cyclic $\F_{q^2}$-codes of length $n$, we will make use of skew polynomials with coefficients in $\F_{q^2}$ and some associated polynomial algebras. Denote
\[R_n := \F_{q^2}[X; \sigma]/\left( X^n-1\right),\qquad\ \]
\[ P_n:=\F_{q}[X;\sigma]/\left( X^n-1\right)\subseteq R_n.\]
Any element of $R_n$ can uniquely be written as $f(X)+(X^n-1)$ for a suitable representative $f(X)=\sum_{i=0}^{n-1}f_i X^i\in \F_{q^2}[X;\sigma]$, and we will often conflate $f(X)+(X^n-1)\in R_n$ with its representative $f(X)=\sum_{i=0}^{n-1}f_i X^i\in \F_{q^2}[X;\sigma]$ for ease of notation. Notice that $\F_{q}[X;\sigma]=\F_{q}[X]$ and $P_n=\F_{q}[X]/\left( X^n-1\right)$ are commutative under multiplication while $R_n$ is not commutative. Further, note that
$\C\subseteq R_n$ is an $\F_q$-linear skew cyclic $\F_{q^2}$-code of length $n$ if and only if  $\C$ is an $\F_{q}[X]$-submodule of $R_n$.

Given $f(X)+(X^n-1)$, $g(X)+(X^n-1)\in R_n$ with $f(X)=\sum_{i=0}^{n-1}f_i X^i$ and $g(X)=\sum_{i=0}^{n-1}g_iX^i$, we again define the \emph{Euclidean inner product} as
$f(X) \star g(X):= \langle (f_0, f_1,\ldots, f_{n-1}), (g_0, g_1, \ldots, g_{n-1})  \rangle$,
the \emph{Hermitian inner product} as
$f(X) \bullet g(X):= \langle (f_0, f_1,\ldots, f_{n-1}), (g_0, g_1, \ldots, g_{n-1})  \rangle_H$,
the \emph{trace Euclidean inner product} as
$f(X) \circledast  g(X):= \tr \left( f(X) \star g(X) \right)$,
and the \emph{trace Hermitian inner product} as
$f(X)  \boxdot g(X):= \tr \left( f(X) \bullet g(X) \right)$.
Proposition \ref{prop 2.6} remains valid.
%
   %

\cite[Theorem 3.2]{Verma-Sharma:2024} and \cite[Lemma 3.4]{Verma-Sharma:2024} easily adapt to the case of skew cyclic codes.

\begin{proposition}\label{codeshape1}
    Let $\C$ be an $\F_q$-linear skew cyclic $\F_{q^2}$-code of length $n$. There exist $f(X),g(X),q(X)\in \F_q[X]$ such that $f(X)$ and $g(X)$ divide $ X^n-1$ and $\C= \langle f(X)+ \gamma q(X), \gamma g(X) \rangle$.
\end{proposition}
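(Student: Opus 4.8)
The plan is to exploit the decomposition $R_n = P_n + \gamma P_n$ coming from $\F_{q^2} = \F_q + \gamma \F_q$, writing every element of $R_n$ uniquely as $a(X) + \gamma b(X)$ with $a(X), b(X) \in P_n$. The crucial structural input is the multiplication rule $f(X)\gamma = \gamma f(-X)$ established just above: for $r(X) \in \F_q[X]$ the left action on $R_n$ becomes
\[ r(X)\bigl( a(X) + \gamma b(X)\bigr) = r(X) a(X) + \gamma\, r(-X) b(X). \]
Thus the $\F_q$-part transforms by ordinary multiplication while the $\gamma$-part is twisted by $X \mapsto -X$. Since $n$ is even, $(-X)^n - 1 = X^n - 1$, so the substitution $X \mapsto -X$ descends to an $\F_q$-algebra automorphism of $P_n$; in particular $\{r(-X) : r(X) \in \F_q[X]\} = \F_q[X]$.

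First I would extract the two ideals that produce $f$ and $g$. Let $\pi\colon R_n \to P_n$ be the $\F_q$-linear projection $a + \gamma b \mapsto a$. Because $\C$ is a left $\F_q[X]$-module and the $\F_q$-part of $r(X)(a + \gamma b)$ is $r(X) a(X)$, the image $\pi(\C)$ is an ideal of $P_n$, hence $\pi(\C) = \langle f(X)\rangle$ with $f(X) \mid X^n - 1$ by Lemma~\ref{euclideandual}. Next I would set $J := \{ b(X) \in P_n : \gamma b(X) \in \C\}$; the rule above gives $r(X)\cdot \gamma b(X) = \gamma\, r(-X) b(X)$, so $J$ is closed under multiplication by every $r(-X)$, and since these exhaust $\F_q[X]$, $J$ is an ideal, $J = \langle g(X)\rangle$ with $g(X)\mid X^n - 1$. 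Consequently $\C \cap \gamma P_n = \langle \gamma g(X)\rangle$, the left $\F_q[X]$-submodule generated by $\gamma g(X)$.

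Finally I would lift and assemble the generators. Choose $c_0 \in \C$ with $\pi(c_0) = f(X)$ and write $c_0 = f(X) + \gamma q(X)$ with $q(X) \in \F_q[X]$. I claim $\C = \langle f(X) + \gamma q(X),\ \gamma g(X)\rangle$. The inclusion $\supseteq$ is immediate since both generators lie in $\C$. For $\subseteq$, take $c = a(X) + \gamma b(X) \in \C$; then $a \in \pi(\C) = \langle f\rangle$, so $a = s(X) f(X)$ in $P_n$ for some $s(X) \in \F_q[X]$. The $\F_q$-part of $s(X) c_0$ is exactly $s(X) f(X) = a$, whence $c - s(X) c_0 \in \gamma P_n$ and, being in $\C$, lies in $\C \cap \gamma P_n = \langle \gamma g\rangle$. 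Thus $c$ is an $\F_q[X]$-combination of $c_0$ and $\gamma g(X)$, proving the claim and the proposition.

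The main obstacle is the bookkeeping around the twist $X \mapsto -X$: one must confirm that the $\gamma$-part action genuinely ranges over all of $\F_q[X]$ — which is exactly where evenness of $n$ enters — so that $J$ is a genuine ideal rather than merely a $\phi$-stable subspace. Notably, the lifting step itself uses only the $\F_q$-part of $s(X) c_0$, so the twist does not obstruct the reduction; everything else reduces to the standard principal-ideal structure of $P_n$ already recorded in Lemma~\ref{euclideandual}.
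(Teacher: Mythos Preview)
Your proof is correct and follows essentially the same route as the paper's: project onto the $\F_q$-component to obtain the ideal $\langle f\rangle$, identify the kernel with $\gamma\langle g\rangle$, and lift $f$ to a preimage $f+\gamma q$. The paper is terser---it simply asserts that the projection is an $\F_q[X]$-module homomorphism and that the corresponding set $M$ is a submodule, and it omits the verification of the two inclusions---whereas you make the twist $r(X)\gamma=\gamma r(-X)$ and the resulting ideal argument explicit; but there is no substantive difference in strategy.
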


\begin{proof}
    Let $\psi: \C \rightarrow P_n$ be given by 
    \[ \psi \left( \sum_{i=0}^{n-1}(a_i +\gamma b_i)X^i \right) = \sum_{i=0}^{n-1}a_iX^i \]
    for all $a_i,b_i \in \F_q$. Note that $\psi$ is an $\F_q[X]$-module homomorphism and $\operatorname{Im}{\psi}$ is an $\F_q[X]$-submodule of $P_n$. Therefore, $\operatorname{Im}{\psi}$ is a linear cyclic $\F_q$-code of length $n$ and $\operatorname{Im}{\psi}= \langle f(X)\rangle$ for some $f(X)\in \F_q[X]$ with $f(X)\mid X^n-1$. 

    Besides, if $M=\left\{d(X)\in \F_q[X] : \gamma d(X)\in \ker \psi\right\}$, that is, $\ker\psi = \gamma M$, we have that $M$ is an $\F_q[X]$-submodule of $P_n$. Therefore, $M$ is a linear cyclic $\F_q$-code of length $n$ and $M= \langle g(X)\rangle$ for some $g(X)\in \F_q[X]$ with $g(X)\mid X^n-1$. Hence,
    $\C  = \langle f(X)+\gamma q(X),\gamma g(X)\rangle$
    where $q(X)\in \F_q[X]$ is chosen such that $f(X)+ \gamma q(X)\in \C$.
\end{proof}

\begin{proposition}\label{codeshape2}
    Let $f(X),g(X),q(X)$ as in Proposition \ref{codeshape1}. Then $q(X)$ is uniquely determined by $f(X)$ and $g(X)$ by requiring $\deg q(X)< \deg g(X)$.
\end{proposition}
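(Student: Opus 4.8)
The plan is to exploit the structure already extracted in the proof of Proposition~\ref{codeshape1}: the $\F_q[X]$-module homomorphism $\psi$ sends $\C$ onto the cyclic code $\langle f(X)\rangle$ and has kernel exactly $\gamma M=\gamma\langle g(X)\rangle$. Since $f(X)$ and $g(X)$ are the (monic) generators of $\operatorname{Im}\psi$ and of $M$, they are already pinned down by $\C$; the entire statement therefore reduces to controlling the ambiguity in a preimage of $f(X)$ under $\psi$, i.e.\ in the choice of $q(X)$ with $f(X)+\gamma q(X)\in\C$.

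First I would record the key fact that the $\gamma$-part of $\C$ is a full left submodule, namely that $\gamma\,s(X)g(X)\in\C$ for every $s(X)\in\F_q[X]$. This follows from $\gamma g(X)\in\C$ together with the multiplication rule $a(X)\gamma=\gamma a(-X)$: taking $a(X)=s(-X)$ gives $a(X)\cdot\gamma g(X)=\gamma\,s(X)g(X)\in\C$, using that $\C$ is a left $\F_q[X]$-module. Combined with $\ker\psi=\gamma\langle g(X)\rangle$, and with the standard fact that for $g(X)\mid X^n-1$ the ideal $\langle g(X)\rangle$ consists precisely of the degree-$<n$ multiples of $g(X)$, this shows that the elements of $\C$ of the form $\gamma c(X)$ are exactly those with $g(X)\mid c(X)$.

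From here the two halves of the proof are short. For existence of a normalized representative, I would start from any $q(X)$ with $f(X)+\gamma q(X)\in\C$, divide $q(X)=g(X)s(X)+r(X)$ with $\deg r<\deg g$, and subtract the kernel element $\gamma g(X)s(X)\in\C$ to get $f(X)+\gamma r(X)\in\C$ with $\deg r<\deg g$. For uniqueness, if $f(X)+\gamma q_1(X)$ and $f(X)+\gamma q_2(X)$ both lie in $\C$ with $\deg q_i<\deg g$, then applying $\psi$ shows their difference $\gamma\bigl(q_1(X)-q_2(X)\bigr)$ lies in $\C\cap\ker\psi=\gamma\langle g(X)\rangle$, so $g(X)\mid\bigl(q_1(X)-q_2(X)\bigr)$; the degree bound then forces $q_1(X)=q_2(X)$.

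The main obstacle is the noncommutativity of $R_n$: one must verify carefully that the left $\F_q[X]$-span of $\gamma g(X)$ is genuinely all of $\gamma\langle g(X)\rangle$, which is precisely where the twist $a(X)\gamma=\gamma a(-X)$ enters, and one must keep the distinction between divisibility in $\F_q[X]$ and membership in the ideal of $P_n$ straight when reducing modulo $X^n-1$. Everything else is routine Euclidean division.
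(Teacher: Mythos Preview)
Your proposal is correct and follows essentially the same route as the paper: both arguments normalize $q$ via Euclidean division modulo $g$, using that the ambiguity in the choice of $q$ is exactly $\ker\psi=\gamma\langle g(X)\rangle$. The only cosmetic difference is that the paper moves $\gamma$ to the right and divides $q(-X)$ by $g(-X)$, whereas you keep $\gamma$ on the left and divide $q(X)$ by $g(X)$; your treatment of uniqueness is in fact more explicit than the paper's, which leaves that step largely implicit in the uniqueness of the division algorithm.
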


\begin{proof}
    By performing the division algorithm, we have the uniqueness of $a(X),r(X)\in \F_q[X]$ with $q(-X)= a(X) g(-X)+ r(X)$ and $\deg r(X)< \deg g(X)$. Thus,
    \begin{align*}
    \langle f(X)+\gamma q(X),\gamma g(X)\rangle &= \langle f(X)+ q(-X)\gamma, g(-X)\gamma\rangle\\ &= \langle f(X)+ r(X)\gamma,g(-X)\gamma\rangle= \langle f(X)+\gamma r(-X),\gamma  g(X)\rangle. \qedhere
    \end{align*}
\end{proof}

\begin{lemma}
    If $f(X),g(X),q(X)$ are as in Propositions \ref{codeshape1} and \ref{codeshape2}, then $g(X)\mid \left(\frac{X^n-1}{f(-X)}  q(X)\right)$.
\end{lemma}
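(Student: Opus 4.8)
The plan is to exploit the projection $\psi\colon \C \to P_n$ from the proof of Proposition \ref{codeshape1}, together with the skew multiplication rule $a(X)\gamma = \gamma\,a(-X)$ for $a(X)\in\F_q[X]$, in order to produce a single element of $\C$ whose entire real part vanishes. First I would record two preliminary facts. Since the definition of a skew cyclic code forces $n$ to be even, substituting $X\mapsto -X$ in $f(X)\mid X^n-1$ gives $f(-X)\mid (-X)^n-1 = X^n-1$, so $\frac{X^n-1}{f(-X)}$ is a genuine element of $\F_q[X]$. Second, because $g(X)\mid X^n-1$, the ideal $\langle g(X)\rangle$ of $P_n$ is the image of the principal ideal $(g(X))\subseteq \F_q[X]$; hence for any representative $p(X)$ one has $p(X)+(X^n-1)\in\langle g(X)\rangle$ if and only if $g(X)\mid p(X)$ in $\F_q[X]$. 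This second observation is what lets me pass freely between membership in $P_n$ and honest polynomial divisibility, sidestepping the fact that the product $\frac{X^n-1}{f(-X)}q(X)$ may have degree at least $n$.

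The core computation would be to set $a(X) = \frac{X^n-1}{f(X)}\in\F_q[X]$ and to evaluate the element $a(X)\bigl(f(X)+\gamma q(X)\bigr)\in\C$, which lies in $\C$ because $\C$ is a left $\F_q[X]$-module. Using $a(X)\gamma = \gamma\,a(-X)$ and the commutativity of $\F_q[X]$ inside $\F_{q^2}[X;\sigma]$, this element equals
\[ a(X)f(X) + \gamma\,a(-X)q(X) = (X^n-1) + \gamma\,\frac{X^n-1}{f(-X)}\,q(X), \]
where I have used $a(-X) = \frac{(-X)^n-1}{f(-X)} = \frac{X^n-1}{f(-X)}$ (again by $n$ even). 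Modulo $X^n-1$ the real part of this element is $0$, so $\psi$ sends it to $0$ and it therefore lies in $\ker\psi = \gamma\langle g(X)\rangle$.

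Reading off the $\gamma$-component then gives $\frac{X^n-1}{f(-X)}q(X)+(X^n-1)\in\langle g(X)\rangle$, and by the second preliminary fact this is precisely $g(X)\mid \frac{X^n-1}{f(-X)}q(X)$ in $\F_q[X]$, which is the claim. I do not anticipate a serious obstacle. The one point that genuinely requires care is the appearance of $f(-X)$ rather than $f(X)$: it arises entirely from moving $\gamma$ past $a(X)$ in the skew ring, and it forces the use of $n$ even so that $f(-X)$ still divides $X^n-1$. The only other subtlety is the degree bookkeeping in the final divisibility, which the ideal-theoretic reformulation handles cleanly.
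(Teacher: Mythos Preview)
Your proposal is correct and follows essentially the same approach as the paper: both multiply the generator $f(X)+\gamma q(X)$ on the left by $\frac{X^n-1}{f(X)}$, use the skew rule $a(X)\gamma=\gamma\,a(-X)$ together with $n$ even to rewrite the result as $\gamma\,\frac{X^n-1}{f(-X)}q(X)$ in $R_n$, and then read off the conclusion from $\ker\psi=\gamma\langle g(X)\rangle$. Your version is in fact slightly more explicit than the paper's, since you spell out why $\frac{X^n-1}{f(-X)}\in\F_q[X]$ and why membership in $\langle g(X)\rangle\subseteq P_n$ is equivalent to honest divisibility by $g(X)$ in $\F_q[X]$.
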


\begin{proof}
    Let $c(X)= \frac{X^n-1}{f(X)}\left(f(X)+\gamma q(X)\right)\in \C$. Recall that $n$ is even and notice that
    \[\psi(c(X))= \psi \left( \frac{X^n-1}{f(X)} \gamma q(X)  \right) = \psi \left(\gamma \frac{(-X)^n-1}{f(-X)}  q(X)  \right)= \psi \left(\gamma \frac{X^n-1}{f(-X)}  q(X)  \right)=0. \]
    Thus, $\gamma \frac{X^n-1}{f(-X)}  q(X) \in \ker \psi$,  $\frac{X^n-1}{f(-X)}  q(X)\in \langle g(X)\rangle$, and thus $g(X)\mid \left(\frac{X^n-1}{f(-X)}  q(X)\right)$.
\end{proof}

\begin{theorem}
    Let $\C$ be an $\F_q$-linear skew cyclic $\F_{q^2}$-code of length $n$. Then there exist $w(X)$, $\ell(X)$, $f(X)$, $g(X)$, $q(X)\in \F_q[X]$ such that $\deg(q)<n$,
    \[X^n-1= w(X)\ell(X)f(X) g(X)\quad \mbox{ and, }\quad \C = \langle w(X) f(X)+ \gamma q(X), \gamma w(X)g(X)\rangle. \]
    
\end{theorem}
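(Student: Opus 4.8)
The plan is to promote the two-generator description already obtained in Proposition \ref{codeshape1} into the claimed four-factor form, the extra factors $w$ and $\ell$ being produced by splitting off the greatest common divisor of the two ``diagonal'' polynomials. First I would apply Proposition \ref{codeshape1}, followed by Proposition \ref{codeshape2}, to fix generators
\[ \C = \langle f_0(X) + \gamma q_0(X),\ \gamma g_0(X) \rangle, \]
where $f_0(X), g_0(X), q_0(X) \in \F_q[X]$, both $f_0$ and $g_0$ divide $X^n - 1$, and $\deg q_0 < \deg g_0 \le n$. All polynomials here lie in the commutative ring $\F_q[X] = \F_q[X;\sigma]$, so the ensuing divisibility manipulations take place entirely in a principal ideal domain and the skew rule $X\gamma = -\gamma X$ plays no role.

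Next I would set $w(X) := \gcd(f_0, g_0)$ and factor $f_0 = w f$ and $g_0 = w g$, so that $f := f_0/w$ and $g := g_0/w$ are coprime polynomials in $\F_q[X]$. The crucial point is that
\[ w(X) f(X) g(X) = \frac{f_0(X) g_0(X)}{\gcd(f_0, g_0)} = \operatorname{lcm}(f_0, g_0) \]
divides $X^n - 1$, since $X^n - 1$ is a common multiple of $f_0$ and $g_0$ and the least common multiple divides every common multiple. Hence $\ell(X) := (X^n - 1)/(w f g)$ is a genuine element of $\F_q[X]$, and by construction $X^n - 1 = w(X)\ell(X)f(X)g(X)$.

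Finally, taking $q(X) := q_0(X)$ (which already satisfies $\deg q < n$), I would observe that $w(X)f(X) + \gamma q(X) = f_0(X) + \gamma q_0(X)$ and $\gamma w(X) g(X) = \gamma g_0(X)$ are literally the original generators, whence
\[ \langle w(X) f(X) + \gamma q(X),\ \gamma w(X) g(X) \rangle = \langle f_0(X) + \gamma q_0(X),\ \gamma g_0(X) \rangle = \C. \]
The argument has essentially no obstacle: the only step that is not formal rewriting is the divisibility $\operatorname{lcm}(f_0, g_0) \mid X^n - 1$ needed to guarantee that $\ell$ is a polynomial, and this is immediate from the defining property of the lcm in $\F_q[X]$. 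I would expect the author's proof to follow this route, with the factorization $X^n - 1 = w\ell f g$ arranged precisely so that $\gcd(f,g)=1$, a normalization that is convenient for the dual computations in Theorem \ref{main skew cyclic}.
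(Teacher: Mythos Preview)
Your proposal is correct and follows essentially the same route as the paper's proof: invoke Proposition~\ref{codeshape1}, set $w=\gcd(f_0,g_0)$, use $\operatorname{lcm}(f_0,g_0)\mid X^n-1$ to define $\ell$, and observe that the generators are unchanged. Your proof is in fact slightly more complete, since you explicitly invoke Proposition~\ref{codeshape2} to secure $\deg q<n$, a point the paper's own argument leaves implicit.
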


\begin{proof}
    Recall that there exist $f(X),g(X),q(X)\in \F_q[X]$ such that $f(X)$ and $g(X)$ divide $ X^n-1$ and $\C= \langle f(X)+ \gamma q(X), \gamma g(X) \rangle$. Let $w(X)= \gcd\left(f(X),g(X)\right)$. Then we can write $f(X)=w(X)f_1(X)$ and $g(X)=w(X)g_1(X)$ for some $f_1(X),g_1(X)\in \F_q[X]$. Since $f(X),g(X)\mid X^n-1$, we have $\operatorname{lcm}\left(f(X),g(X)\right) =w(X)f_1(X)g_1(X) \mid X^n-1$ and
    \[ \C = \langle f(X)+ \gamma q(X), \gamma g(X) \rangle = \langle w(X)f_1(X)+ \gamma q(X), \gamma w(X)g_1(X) \rangle. \qedhere\]
\end{proof}

Next we discuss the trace Euclidean dual code of a special family of $\F_q$-linear skew cyclic $\F_{q^2}$-codes.
\begin{theorem}\label{verma skew cor}
    Let $w(X), \ell(X), f(X),g(X)\in \F_q[X]$ such that 
    \[X^n-1= w(X) \ell(X) f(X)g(X).\] 
    
    Define the $\F_q$-linear skew cyclic $\F_{q^2}$-codes \[\C= \langle 
 w(X) f(X), \gamma w(X)g(X) \rangle\quad \mbox{ and }\quad \mathcal{D}= \langle 
 \ell^*(X) g^*(X), \gamma \ell^*(X)f^* (X) \rangle.\]  Then $\C^{\perp_{TE}}=\mathcal{D}.$
\end{theorem}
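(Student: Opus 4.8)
The plan is to mirror the structure of the proof of the cyclic analogue (Theorem \ref{verma cor}), adapting the computations to the noncommutative setting of $R_n = \F_{q^2}[X;\sigma]/(X^n-1)$. First I would establish the inclusion $\C^{\perp_{TE}} \subseteq \mathcal{D}$. Take an arbitrary element of $\C^{\perp_{TE}}$ and, using $\F_{q^2} = \F_q + \gamma\F_q$, write it as $c(X) + \gamma d(X)$ with $c(X), d(X) \in \F_q[X]$. Since $\C$ is the left $\F_q[X]$-module generated by $w(X)f(X)$ and $\gamma w(X)g(X)$, the defining orthogonality condition says that for all $a(X), b(X) \in \F_q[X]$,
\[
\bigl(a(X)w(X)f(X) + b(X)\gamma w(X)g(X)\bigr) \circledast \bigl(c(X) + \gamma d(X)\bigr) = 0.
\]
The strategy is to feed in special choices of $a(X), b(X)$ to decouple the conditions on $c(X)$ and $d(X)$ separately.

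Setting $a(X) = 0$ and expanding via the identities in Proposition \ref{prop 2.6}, the cross term $b(X)\gamma w(X)g(X) \circledast c(X)$ carries a factor $\tr(\gamma) = 0$ and vanishes, while the surviving term is $\tr(\gamma^2)\bigl[b(X)w(X)g(X) \star d(X)\bigr]$. Since $\tr(\gamma^2) \ne 0$ by Lemma \ref{gammasquare}, this forces $b(X)w(X)g(X) \star d(X) = 0$ for all $b(X)$. Pushing the factor $b(X)w(X)g(X)$ across the Euclidean product using property \eqref{lem:passtheq}, this is equivalent to $d(X)$ being Euclidean-orthogonal to the cyclic code $\langle w(X)g(X)\rangle$, so by Lemma \ref{euclideandual} we get $d(X) \in \langle [(X^n-1)/(w(X)g(X))]^* \rangle = \langle \ell^*(X)f^*(X)\rangle$. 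Symmetrically, setting $b(X) = 0$ kills the $\gamma$-cross term (again via $\tr(\gamma) = 0$) and leaves $a(X)w(X)f(X) \circledast c(X) = 2\,a(X)w(X)f(X)\star c(X) = 0$, forcing $c(X) \in \langle \ell^*(X)g^*(X)\rangle$. Together these give $c(X) + \gamma d(X) \in \mathcal{D}$, establishing $\C^{\perp_{TE}} \subseteq \mathcal{D}$. The reverse inclusion $\mathcal{D} \subseteq \C^{\perp_{TE}}$ follows by running the same bilinear expansion on a general generator-pairing and checking every resulting term vanishes, which the decoupled computations above already supply.

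The main obstacle I anticipate is \emph{noncommutativity}: in $R_n$ the scalar $\gamma$ does not commute past $X$, since $X\gamma = -\gamma X$ (the multiplication rule $f(X)\gamma = \gamma f(-X)$). One must be careful that the generators of $\C$ are taken as a \emph{left} $\F_q[X]$-module and that the inner product identities in Proposition \ref{prop 2.6}, which are asserted to remain valid in the skew setting, are applied with the correct placement of $\gamma$ and the sign-flipping substitution $X \mapsto -X$. Concretely, when sliding $a(X), b(X)$ through the products I must verify that the polynomials $w(X), f(X), g(X)$, being in the fixed field $\F_q[X]$, genuinely commute with each other (they do, since $\F_q[X;\sigma] = \F_q[X]$ is commutative), so that the only genuinely noncommutative interactions are those involving $\gamma$, all of which are absorbed into the $\tr(\gamma) = 0$ and $\tr(\gamma^2) \ne 0$ scalar coefficients. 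Once this bookkeeping is confirmed, the computation is formally identical to the cyclic case and the result follows.
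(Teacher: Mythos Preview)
Your plan is correct and matches the paper's proof essentially line for line: set $a(X)=0$ to force $d(X)\in\langle \ell^*(X)f^*(X)\rangle$, set $b(X)=0$ to force $c(X)\in\langle \ell^*(X)g^*(X)\rangle$, then observe the reverse inclusion from the same computations. The only detail to tighten is that after commuting $\gamma$ past $b(X)$ the surviving term is $\tr(\gamma^2)\bigl[b(-X)w(X)g(X)\star d(X)\bigr]$ rather than $b(X)$ (the sign flip is not ``absorbed'' into the trace coefficient as you suggest), but since $b$ ranges over all of $\F_q[X]$ this is harmless, exactly as in the paper.
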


\begin{proof}
    To show $\C^{\perp_{TE}}\subseteq \mathcal{D}$, take $c(X)+ \gamma d(X)\in \C^{\perp_{TE}}$  with $c(X),d(X)\in \F_q[X]$. Then, for any $a(X), b(X)\in \F_q[X]$ we have that 
    \[ \left(a(X)w(X)f(X)+b(X)\gamma w(X)g(X)\right)\circledast \left( c(X) + \gamma d(X) \right) =0.\]

    In particular, if $a(X)=0$, then 
    \begin{align*}
        0 &= b(X)\gamma w(X)g(X)\circledast  c(X) + b(X)\gamma w(X)g(X)\circledast \gamma d(X) \\&= \gamma b(-X)w(X)g(X)\circledast  c(X) + \gamma b(-X) w(X)g(X)\circledast \gamma d(X) \\
        &= \tr{(\gamma)}[b(-X) w(X)g(X)\star  c(X)] +\tr{(\gamma^2)}[b(-X) w(X)g(X)\star  d(X)]\\
        &= \tr{(\gamma^2)} [b(-X) w(X)g(X)\star  d(X)].
    \end{align*}
    Thus, $d(X)\in \langle \left[ \left(X^n-1\right)/\left(w(X) g(X)\right)\right]^* \rangle= \langle  \ell^*(X) f^*(X) \rangle.$

    Besides, if $b(X)=0$, then 
    \begin{align*}
        0 &= a(X)w(X)f(X)\circledast  c(X) + a(X)w(X)f(X)\circledast \gamma d(X) \\
        &= a(X)w(X)f(X)\circledast  c(X) +\tr{(\gamma)} [a(X)w(X)f(X)\circledast d(X)] \\
        &= a(X)w(X)f(X)\circledast  c(X).
    \end{align*}
    Thus, $c(X)\in \langle \left[ \left(X^n-1\right)/\left(w(X) f(X)\right)\right]^* \rangle= \langle  \ell^*(X) g^*(X) \rangle$, and $\C^{\perp_{TE}}\subseteq \mathcal{D}$ follows. From the above calculations, it is now easy to also verify that $\mathcal{D}\subseteq \C^{\perp_{TE}}$. 
\end{proof}

\begin{theorem} \label{main skew cyclic}
    Let $w(X), \ell(X), f(X),g(X)\in \F_q[X]$ such that 
    \[X^n-1= w(X) \ell(X) f(X)g(X).\]  
   Consider the general $\F_q$-linear skew cyclic $\F_{q^2}$-code \[\C= \langle 
 w(X) f(X)+ \gamma q(X), \gamma w(X)g(X)\rangle .\]  
 Then 
\begin{align*}
     \C^{\perp_{TE}} = & \bigg\langle c'(X) \dfrac{X^n-1}{k(X)} + \gamma d'(-X) \dfrac{X^n-1}{k(-X)}\ell^*(X)f^*(X), \\ \nonumber
      & \qquad\qquad\qquad  \dfrac{\gamma^2  \ell^*(-X) f^*(-X) \widehat{q}(-X)}{h(X)}
     - \gamma \dfrac{\widehat{w}(-X)\widehat{f}(-X)}{h(-X)} \ell^*(X) f^*(X) \bigg\rangle, 
 \end{align*}
%
where $h(X)= \gcd( \widehat{w}(X) \widehat{f}(X), \ell^*(-X)f^*(-X) \widehat{q}(-X))$, $k(X)=\gcd( h(X), X^n-1 )$, and $c'(X), d'(X)\in \F_q[X]$ are such that
\[   c'(X) \widehat{w}(X) \widehat{f}(X) +\gamma^2 d'(X) \ell^*(-X) f^*(-X)\widehat{q}(-X) = h(X).  \]
 \end{theorem}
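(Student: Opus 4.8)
\emph{Overview.} The plan is to mirror the proof of Theorem \ref{verma skew cor}, but now carry the extra summand $\gamma q(X)$ through the computation. The essential new feature relative to the commutative Theorem \ref{main cyclic} is that left multiplication by $a(X),b(X)\in\F_q[X]$ interacts with $\gamma$ via the skew rule $a(X)\gamma=\gamma a(-X)$, so the substitution $X\mapsto -X$ will propagate throughout. First I would take a general codeword $u(X)=a(X)\big(w(X)f(X)+\gamma q(X)\big)+b(X)\gamma w(X)g(X)$ and a general $v(X)=c(X)+\gamma d(X)\in\C^{\perp_{TE}}$ with $a,b,c,d\in\F_q[X]$. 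Pushing every $\gamma$ to the left gives $u(X)=a(X)w(X)f(X)+\gamma\big(a(-X)q(X)+b(-X)w(X)g(X)\big)$, and expanding $u\circledast v$ by the $\F_q$-bilinear identities of Proposition \ref{prop 2.6} collapses all cross terms (because $\tr(\gamma)=0$) to
\[u(X)\circledast v(X)=2\Big[a(X)w(X)f(X)\star c(X)+\gamma^2\big(a(-X)q(X)+b(-X)w(X)g(X)\big)\star d(X)\Big].\]

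\emph{Reducing to one congruence.} Setting this to zero for all $a,b$ and first taking $a(X)=0$ forces $b(-X)w(X)g(X)\star d(X)=0$ for every $b$; since $(X^n-1)/\big(w(X)g(X)\big)=\ell(X)f(X)$, Lemma \ref{euclideandual} yields $d(X)\in\langle\ell^*(X)f^*(X)\rangle$, say $d(X)=d_1(X)\ell^*(X)f^*(X)$. Feeding this back makes the $b$-term vanish, and I would then remove the mismatch between $a(X)$ and $a(-X)$ using the flip identity \eqref{lem:flipsings}, rewriting $a(-X)q(X)\star d(X)=a(X)q(-X)\star d(-X)$; next \eqref{lem:passtheq} peels off the common factor $a(X)$. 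Using that $\widehat{\,\cdot\,}$ is multiplicative and commutes with $X\mapsto -X$ (as $n$ is even) together with $d(-X)=d_1(-X)\ell^*(-X)f^*(-X)$, non-degeneracy of $\star$ on $P_n$ delivers the single congruence
\[c(X)\widehat w(X)\widehat f(X)+\gamma^2 d_1(-X)\,\ell^*(-X)f^*(-X)\,\widehat q(-X)\equiv 0\pmod{X^n-1},\]
which is the implicit description of $\C^{\perp_{TE}}$.

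\emph{Solving explicitly.} Writing $P=\widehat w\widehat f$, $Q=\ell^*(-X)f^*(-X)\widehat q(-X)$ and $\tilde d=d_1(-X)$, the condition reads $cP+\gamma^2\tilde d\,Q\equiv 0\pmod{X^n-1}$. The set of attainable values $cP+\gamma^2\tilde d\,Q$ is the ideal $\langle h\rangle$ with $h=\gcd(P,Q)$ (as $\gamma^2\in\F_q$ is nonzero), so the congruence asks that this value lie in $\langle h\rangle\cap\langle X^n-1\rangle=\big\langle\operatorname{lcm}(h,X^n-1)\big\rangle=\big\langle h(X^n-1)/k\big\rangle$ with $k=\gcd(h,X^n-1)$. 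A particular solution is obtained by scaling the extended-Euclidean identity $c'P+\gamma^2 d'Q=h$ by $(X^n-1)/k$, and the homogeneous (exactly-zero) solutions are the $\F_q[X]$-multiples of the syzygy $\big(\gamma^2 Q/h,\,-P/h\big)$. Translating both pairs back to $v=c+\gamma\,\tilde d(-X)\ell^*f^*$ — using that $n$ is even, so $(-X)^n-1=X^n-1$ and hence $(X^n-1)/k(-X)$ is a polynomial — produces precisely the two stated generators.

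\emph{Main obstacle.} The difficulty is twofold. Combinatorially, it is the disciplined bookkeeping of the substitutions $X\mapsto -X$: they enter once through the skew rule on the generators, again through the flip \eqref{lem:flipsings}, and a third time through the back-substitution $\tilde d=d_1(-X)$, and the three must be reconciled so that the evaluations $\widehat w(-X)\widehat f(-X)$, $h(-X)$, and $k(-X)$ land exactly where the theorem places them. Structurally, the subtler point is justifying that the particular solution together with the single homogeneous generator span \emph{all} solutions as a left $\F_q[X]$-module, even though $P_n$ is not a domain; I would handle this by lifting the congruence to the PID $\F_q[X]$, computing the rank-two syzygy module of $(P,\gamma^2 Q,\,X^n-1)$ there, and projecting onto the $(c,\tilde d)$-coordinates, which is exactly where the greatest common divisors $h$ and $k$ organize the answer.
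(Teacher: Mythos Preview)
Your proposal is correct and follows essentially the same route as the paper's proof: expand $u\circledast v$ using the skew rule $a(X)\gamma=\gamma a(-X)$ and Proposition~\ref{prop 2.6}, set $a=0$ to force $d\in\langle\ell^*f^*\rangle$, apply \eqref{lem:flipsings} and \eqref{lem:passtheq} to reduce to the single congruence $c\,\widehat w\widehat f+\gamma^2 d_1(-X)\ell^*(-X)f^*(-X)\widehat q(-X)\equiv 0\pmod{X^n-1}$, and then solve via the gcds $h$ and $k$ into one particular and one homogeneous generator. Your syzygy-module justification for why these two generators suffice is in fact slightly more explicit than the paper's own ``generator of the (inhomogeneous) solutions'' phrasing, but the underlying argument is identical.
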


\begin{proof}
Let $ u(X)=a(X)\left(w(X) f(X)+\gamma  q(X)\right)+b(X) \gamma w(X) g(X) \in \C $ and $v(X)=c(X)+\gamma d(X)\in \C^{\perp_{TE}}$ with $a(X),b(X),c(X),d(X)\in \F_q[X]$. Then
\begin{align*}
    0=& u(X)   \circledast v(X) \\
    =& a(X) w(X) f (X)  \circledast c(X)+a(X) w (X) f (X)  \circledast \gamma d(X)\\
    & +\gamma a(-X)   q (X)  \circledast c(X)+\gamma a(-X)   q (X)  \circledast \gamma d(X) \\
    & +\gamma b(-X) w(X) g (X)  \circledast c(X)+\gamma b(-X)  w(X) g (X)  \circledast \gamma d(X) \\
    =& 2 a(X) w(X) f (X)  \star c(X)+\tr{(\gamma)}[a(X) w (X) f (X)  \star  d(X)]\\
    & +\tr{(\gamma)}[a(-X)   q (X)  \star c(X)]+\tr{(\gamma^2)}[a(-X)  q (X)  \star  d(X)] \\
    & +\tr{(\gamma)}[b(-X) w(X) g (X)  \star c(X)]+\tr{(\gamma^2)}[b(-X)  w(X) g (X)  \star d(X)] \\
    = & 2 a(X) w(X) f (X) \star c(X) + 2 \gamma^2 a(-X)  q (X) \star d(X) \\
& + 2 \gamma^2 b(-X) w(X) g (X) \star d(X).
\end{align*}

In particular, if $a(X)=0$ it follows that $2\gamma^2 b(-X)w(X)g(X)\star d(X)=0$ for all $b(X)\in \F_q[X]$. Thus, $d(X)\in \langle \ell^*(X) f^*(X)\rangle$ and we can write $d(X)= d_1(X)\ell^*(X)f^*(X)$ for some $d_1(X)\in \F_q[X].$ 

Then, $2 \gamma^2 b(-X) w(X) g (X) \star d_1(X)\ell^*(X)f^*(X)=0$ for all $b(X)\in \F_q[X]$ and
\begin{align*}
    0&= u(X)   \circledast v(X) \\
    \iff 0 &=  2 a(X) w(X) f (X) \star c(X) + 2 \gamma^2 a(-X) q (X) \star d_1(X)\ell^*(X)f^*(X) \\
    \iff 0 &= a(X)w(X) f(X) \star c(X) + \gamma^2 a(-X) q(X) \star d_1(X) \ell^*(X) f^*(X) \\
    \iff 0 &= a(X)w(X) f(X) \star c(X) + \gamma^2 a(X) q(-X) \star d_1(-X) \ell^*(-X) f^*(-X), \mbox{ by }\eqref{lem:flipsings} \\
    \iff 0 &= a(X) \star \left( c(X) \widehat{w}(X) \widehat{f}(X) +  \gamma^2 d_1(-X) \ell^*(-X) f^*(-X) \widehat{q}(-X)\right)
\end{align*}
with $\widehat{q(-X)}=\widehat{q}(-X)$. From this equality, we can see
\[c(X) \widehat{w}(X) \widehat{f}(X) +  \gamma^2 d_1(-X) \ell^*(-X) f^*(-X)\widehat{q}(-X) \in ( X^n-1 ), \] and thus we have the necessary and sufficient condition
\[X^n-1 \mid \left( c(X) \widehat{w}(X) \widehat{f}(X) +\gamma^2 d_1(-X) \ell^*(-X) f^*(-X) \widehat{q}(-X)\right)\]

Let $h(X)= \gcd(\widehat{w}(X) \widehat{f}(X), \ell^*(-X)f^*(-X)\widehat{q}(-X)  )$ and $k(X)=\gcd( h(X), X^n-1 )$. Perform the extended Euclidean algorithm to find $c'(X), d'(X)\in \F_q[X]$ such that
\[  c'(X) \widehat{w}(X) \widehat{f}(X) +\gamma^2 d'(X) \ell^*(-X) f^*(-X)\widehat{q}(-X) = h(X).  \]
Then
\begin{align*}
   \operatorname{lcm}(h(X),X^n-1)=& c'(X) \frac{X^n-1}{k(X)} \widehat{w}(X) \widehat{f}(X) \\
   &+ \gamma^2 d'(X) \frac{X^n-1}{k(X)}  \ell^*(-X) f^*(-X)  \widehat{q}(-X).
\end{align*}
%
Therefore, 
\[ c'(X) \frac{X^n-1}{k(X)} + \gamma d'(-X) \frac{X^n-1}{k(-X)} \ell^*(X)f^*(X) \in \C^{\perp_{TE}}\]
as a generator of the (inhomogeneous) solutions to 
\[(0\ne)\ c(X) \widehat{w}(X) \widehat{f}(X)+\gamma^2 d_1(-X) \ell^*(-X) f^*(-X) \widehat{q}(-X) \in ( X^n-1 ).\] 

The solutions of the homogeneous equation
%
\[  c(X) \widehat{w}(X) \widehat{f}(X) +\gamma^2 d_1(-X) \ell^*(-X) f^*(-X)\widehat{q}(-X)  = 0  \]
are of the form
\begin{align*}
    c(X)=\dfrac{\gamma^2  \ell^*(-X) f^*(-X)\widehat{q}(-X)}{h(X)}e(X), d_1(-X)=  -\dfrac{\widehat{w}(X)\widehat{f}(X)}{h(X)}e(X) \mbox{ with }e(X)\in \F_q[X].
\end{align*}
%
%
Therefore, 
\begin{align*}
& \dfrac{\gamma^2  \ell^*(-X) f^*(-X)\widehat{q}(-X)}{h(X)}e(X) - \gamma \dfrac{\widehat{w}(-X)\widehat{f}(-X)}{h(-X)}e(-X) \ell^*(X)f^*(X)\\ = & e(X)\bigg[\dfrac{\gamma^2  \ell^*(-X) f^*(-X)\widehat{q}(-X)}{h(X)} - \gamma \dfrac{\widehat{w}(-X)\widehat{f}(-X)}{h(-X)} \ell^*(X)f^*(X)\bigg] \in \C^{\perp_{TE}}
\end{align*}
and we discover $\frac{\gamma^2  \ell^*(-X) f^*(-X)\widehat{q}(-X)}{h(X)} - \gamma \frac{\widehat{w}(-X)\widehat{f}(-X)}{h(-X)} \ell^*(X)f^*(X)$ as another generator of $\C^{\perp_{TE}}$.
We conclude
\begin{align*}
     \C^{\perp_{TE}} = & \bigg\langle c'(X) \dfrac{X^n-1}{k(X)} + \gamma d'(-X) \dfrac{X^n-1}{k(-X)}\ell^*(X)f^*(X), \\
     & \qquad\qquad\qquad  \dfrac{\gamma^2  \ell^*(-X) f^*(-X)\widehat{q}(-X)}{h(X)} 
     - \gamma \dfrac{\widehat{w}(-X)\widehat{f}(-X)}{h(-X)} \ell^*(X) f^*(X)   \bigg\rangle .\qedhere
 \end{align*}
\end{proof}
%
%
%
%

\begin{example}
Let $\C$ be the code $\langle w(X) f(X), \gamma w(X)g(X)\rangle$, that is, choose $q(X)=~0$. Then $\widehat{q}(X)=0$, 
\begin{align*}
    h(X)&= \gcd( \widehat{w}(X) \widehat{f}(X), 0 )= \widehat{w}(X) \widehat{f}(X)= X^{2n - \deg w(X) - \deg f(X)}w^*(X)f^*(X),\\
    k(X)&= \gcd( \widehat{w}(X) \widehat{f}(X), X^n-1  ) = w^*(X) f^*(X),
\end{align*}
and $c'(X) \widehat{w}(X) \widehat{f}(X)+\gamma^2 d'(X) \ell^*(-X) f^*(-X)\widehat{q}(-X) = h(X)$ for the constant polynomials $c'(X)=1$ and $d'(X)=0$. Then
%
%
%
%
%
\begin{align*}
     1 \cdot \dfrac{X^n-1}{k(X)} + \gamma \cdot 0 \cdot \dfrac{X^n-1}{k(-X)}\ell^*(X)f^*(X) &= \dfrac{X^n-1}{w^*(X) f^*(X)} = - \ell^*(X)g^*(X)\in \C^{\perp_{TE}} ,\\
    \dfrac{\gamma^2  \ell^*(-X) f^*(-X) \cdot 0}{h(X)} &- \gamma \dfrac{\widehat{w}(-X)\widehat{f}(-X)}{h(-X)}\ell^*(X) f^*(X)\\
    & =  -\gamma \dfrac{\widehat{w}(-X)\widehat{f}(-X) }{ \widehat{w}(-X)\widehat{f}(-X) }\ell^*(X) f^*(X)\\
    & = -\gamma  \ell^*(X) f^*(X)\in \C^{\perp_{TE}},
\end{align*}
and $ \C^{\perp_{TE}} = \left\langle  - \ell^*(X)g^*(X) , -\gamma  \ell^*(X) f^*(X) \right \rangle$, which matches Theorem \ref{verma skew cor}.
\end{example}

\begin{example}
    Let $\C$ be the $\F_3$-linear skew cyclic $\F_9$-code $\langle w(X) f(X)+ \gamma q(X), \gamma w(X)g(X)\rangle$ with $n=10$, $X^n-1= w(X)\ell(X)f(X) g(X)$, $w(X)= X+1$, $\ell(X) = X^4+X^3+X^2+X+1$, $f(X)=X^4-X^3+X^2-X+1$, $g(X)=X-1$, and $q(X)=w(X)$. Then
    \begin{align*}
        w^*(X) &=w(X),\qquad \widehat{w}(X)= X^9 w^*(X)= X^9 w(X),\qquad \ell^*(X)= \ell(X),\qquad \ell^*(-X) =f(X),\\
        f^*(X)&= f(X), \qquad f^*(-X)= \ell(X), \qquad \widehat{f}(X)= X^6 f^*(X)= X^6 f(X),\\
        \widehat{q}(X)&= \widehat{w}(X)= X^9 (X+1),\qquad \widehat{q}(-X)= X^9 (X-1)= X^9 g(X).
    \end{align*}
    Therefore, recalling $\gamma^2=-1$ for $\gamma \in \F_9$, we have
    \begin{align*} 
     h(X)&=\gcd ( \widehat{w}(X)\widehat{f}(X), \gamma^2 \ell^*(-X)f^*(-X) \widehat{q}(-X))\\
     &=\gcd (  X^9 w(X)X^6 f(X), - f(X)\ell(X)X^9 g(X))
     = X^9 f(X) \qquad \mbox{ and }\\
     k(X)&= \gcd \left(h(X),X^n-1\right)= \gcd \left(X^9 f(X),X^n-1\right)= f(X).
     \end{align*}
If we choose $c'(X)=2X^4+2X^3+X^2+2X+1$ and $d'(X)=2X^6 - X^5+X^4-X^3+X^2-X+1$, then
\begin{align*}
&c'(X) \widehat{w}(X) \widehat{f}(X) +\gamma^2 d'(X) \ell^*(-X) f^*(-X)\widehat{q}(-X)\\
= &c'(X) X^9 w(X)X^6 f(X) - d'(X)f(X)\ell(X)X^9 g(X)\\
= &X^9 f(X)=h(X).
\end{align*}

Notice that $k(-X)=f(-X)=\ell(X)$. Then
\begin{align*}
    c'(X) \dfrac{X^n-1}{k(X)} + \gamma & d'(-X) \dfrac{X^n-1}{k(-X)}\ell^*(X)f^*(X)\\
    & =  c'(X) w(X)\ell(X)g(X) + \gamma d'(-X) \dfrac{X^n-1}{\ell}\ell(X)f(X) \\
    & =  c'(X) w(X)\ell(X)g(X) = X^9 + 2X^5 + 2X^4 + 1 \in \C^{\perp_{TE}},
    \end{align*}

\begin{align*}
    \dfrac{\gamma^2  \ell^*(-X) f^*(-X) \widehat{q}(-X)}{h(X)}
     &- \gamma \dfrac{\widehat{w}(-X)\widehat{f}(-X)}{h(-X)} \ell^*(X) f^*(X)\\
    & = - \dfrac{  f(X) \ell(X) X^9 g(X)}{X^9f(X)}
     + \gamma \dfrac{X^9g(X) X^6 f(-X)}{X^9f(-X)} \ell(X) f(X)\\
    & = -\ell(X)g(X)+ \gamma g(X)X^6 \ell(X)f(X)\\
    & = (-X^5+1) + \gamma ( X^9 - X^8 + X^7 -X^6+ X^5-X^4 \\
    & \hspace{24pt}+X^3-X^2+X-1 )\in \C^{\perp_{TE}},
\end{align*}
and 
\begin{align*}
    \C^{\perp_{TE}} &= \big\langle    X^9 + 2X^5 + 2X^4 + 1,
     (-X^5+1) + \gamma ( X^9 - X^8 + X^7 \\
     & \hspace{24pt} -X^6+ X^5-X^4 +X^3-X^2+X-1 )   \big\rangle. 
\end{align*}
\end{example}

The following theorem provides generators for the trace Hermitian dual.

\begin{theorem} \label{hermitian dual skew cyclic}
    Let $w(X), \ell(X), f(X),g(X)\in \F_q[X]$ such that 
    \[X^n-1= w(X) \ell(X) f(X)g(X).\]  
   Consider the general $\F_q$-linear skew cyclic $\F_{q^2}$-code \[\C= \langle 
 w(X) f(X)+ \gamma q(X), \gamma w(X)g(X)\rangle .\]  
 Then 
\begin{align*}
     \C^{\perp_{TH}} = & \bigg\langle c'(X) \dfrac{X^n-1}{k(X)} + \gamma d'(-X) \dfrac{X^n-1}{k(-X)}\ell^*(X)f^*(X), \\ \nonumber
      & \qquad\qquad\qquad  \dfrac{\gamma^{q+1}  \ell^*(-X) f^*(-X) \widehat{q}(-X)}{h(X)}
     - \gamma \dfrac{\widehat{w}(-X)\widehat{f}(-X)}{h(-X)} \ell^*(X) f^*(X) \bigg\rangle,
 \end{align*}
where $h(X)= \gcd( \widehat{w}(X) \widehat{f}(X), \ell^*(-X)f^*(-X) \widehat{q}(-X))$, $k(X)=\gcd( h(X), X^n-1 )$, and $c'(X), d'(X)\in \F_q[X]$ are such that
\[   c'(X) \widehat{w}(X) \widehat{f}(X) +\gamma^{q+1} d'(X) \ell^*(-X) f^*(-X)\widehat{q}(-X) = h(X).  \]
Note here that $\gamma^{q+1} = -\gamma^2$.
 \end{theorem}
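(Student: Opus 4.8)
The plan is to run the argument of Theorem~\ref{main skew cyclic} essentially verbatim, replacing the trace Euclidean inner product $\circledast$ by the trace Hermitian inner product $\boxdot$ throughout. First I would write a typical codeword $u(X) = a(X)\bigl(w(X)f(X) + \gamma q(X)\bigr) + b(X)\gamma w(X)g(X) \in \C$ and a typical element $v(X) = c(X) + \gamma d(X)$ with $a(X),b(X),c(X),d(X) \in \F_q[X]$, and expand $0 = u(X) \boxdot v(X)$ by $\F_q$-bilinearity. The skew commutation $f(X)\gamma = \gamma f(-X)$ moves each $\gamma$ to the left exactly as in the Euclidean proof, producing the same six terms.

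The only substantive difference lies in Proposition~\ref{prop 2.6}: wherever the Euclidean expansion contributes a factor $\tr(\gamma^2)$---namely in the two terms where both factors carry a $\gamma$---the Hermitian expansion instead contributes $\tr(\gamma^{q+1})$. Since $\gamma^{q+1} = -\gamma^2$ by Lemma~\ref{gammasquare}, this amounts to substituting $\gamma^2 \mapsto \gamma^{q+1}$ in every formula of the earlier proof. Collecting terms and using $\tr(\gamma)=0$ together with $\tr(\gamma^{q+1}) = 2\gamma^{q+1}$, I would obtain
\[ 0 = 2a(X)w(X)f(X)\star c(X) + 2\gamma^{q+1}a(-X)q(X)\star d(X) + 2\gamma^{q+1}b(-X)w(X)g(X)\star d(X). \]

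From here the remaining steps coincide with those of Theorem~\ref{main skew cyclic}. Setting $a(X)=0$ and using $\gamma^{q+1}\neq 0$ (which holds since $\gamma^{q+1}=-\gamma^2$ is a nonzero element of $\F_q$ and $q$ is odd) forces $d(X)\in\langle \ell^*(X)f^*(X)\rangle$, so I would write $d(X)=d_1(X)\ell^*(X)f^*(X)$. Applying~\eqref{lem:flipsings} to flip signs and~\eqref{lem:passtheq} to transfer polynomial factors across $\star$ then converts the vanishing condition into the divisibility
\[ (X^n-1) \mid \bigl(c(X)\widehat{w}(X)\widehat{f}(X) + \gamma^{q+1}d_1(-X)\ell^*(-X)f^*(-X)\widehat{q}(-X)\bigr), \]
and I would close the argument by the same extended-Euclidean-algorithm bookkeeping, splitting the solution set into one inhomogeneous generator built from $c'(X),d'(X)$ and one homogeneous generator parametrized by $e(X)\in\F_q[X]$. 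I expect no genuine obstacle here: the main point to confirm is simply that $\gamma^2$ and $\gamma^{q+1}$ play structurally identical roles---both are nonzero elements of $\F_q$---so the substitution $\gamma^2\mapsto\gamma^{q+1}$ carries every intermediate identity through unchanged, yielding exactly the stated generators for $\C^{\perp_{TH}}$.
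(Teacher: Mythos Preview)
Your proposal is correct and matches the paper's approach exactly: the paper's proof simply notes that $\gamma^{q+1}=-\gamma^2\in\F_q$ with $\tr(\gamma^{q+1})=-2\gamma^2$ and states that otherwise the argument is identical to that of Theorem~\ref{main skew cyclic}. Your expansion of the details is precisely what that one-line proof is abbreviating.
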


 \begin{proof}
Recall that $\gamma^{q+1} = -\gamma^2\in \F_q$ with $\tr (\gamma^{q+1}) = -2\gamma^{2}$, see Lemma \ref{gammasquare}. Otherwise, the proof is the same as for Theorem~\ref{main skew cyclic}.
 \end{proof}

 \section{Conclusion and Future Work}
In this article, we showed that assumptions used in \cite{Verma-Sharma:2024} are superfluous and we gave many direct proofs and generalizations for the results contained therein. In subsequent sections, these direct proofs allowed us to study the dual structure of $\F_{q}$-linear skew cyclic $\F_{q^2}$-codes under various inner products. We gave explicit formulas for the generators of these dual codes in terms of the parameters of the original code.

While we have limited the scope of the present article to $\F_q$-linear $\F_{q^2}$-codes, we are interested in extending the analysis of these codes to field extensions of different degrees. More study is also needed for a bound on the minimum distance of these nonlinear skew cyclic codes.

\bibliographystyle{plain}
\bibliography{references.bib}
\end{document}